\newcommand{\Tr}{\text{Tr}}
\newcommand{\Mcal}{\mathcal{M}}
\newcommand{\Hcal}{\mathcal{H}}
\begin{document}
\title{Quantum AIXI: Universal Intelligence via Quantum Information}
\titlerunning{Quantum AIXI}
%
\author{Elija Perrier\inst{1}\orcidID{0000-0002-6052-6798} }
\authorrunning{E. Perrier}
\institute{Centre for Quantum Software \& Information, UTS, Sydney \email{elija.perrier@gmail.com}}

\maketitle              
\begin{abstract}
AIXI is a widely studied model of artificial general intelligence (AGI) based upon principles of induction and reinforcement learning. However, AIXI is fundamentally classical in nature - as are the environments in which it is modelled. Given the universe is quantum mechanical in nature and the exponential overhead required to simulate quantum mechanical systems classically, the question arises as to whether there are quantum mechanical analogues of AIXI. To address this question, we extend the framework to quantum information and present Quantum AIXI (QAIXI). We introduce a model of quantum agent/environment interaction based upon quantum and classical registers and channels, showing how quantum AIXI agents may take both classical and quantum actions. We formulate the key components of AIXI in quantum information terms, extending previous research on quantum Kolmogorov complexity and a QAIXI value function. We discuss conditions and limitations upon quantum Solomonoff induction and show how contextuality fundamentally affects QAIXI models.

\keywords{Quantum \and AIXI \and Universal Intelligence \and Complexity.}
\end{abstract}
\section{Introduction}
A cornerstone in the theoretical landscape of AGI is the AIXI model \cite{hutter2004universal,hutter2007}, which provides a mathematically rigorous and complete framework for an optimal Bayesian reinforcement learning agent. AIXI's optimality is rooted in Solomonoff's theory of universal induction \cite{solomonoff1964} and classical computability theory. However, AIXI and its underlying assumptions are fundamentally classical. Given that our universe is governed by quantum mechanics, a fundamental question arises: what constitutes universal intelligence in a quantum mechanical world? This paper addresses this question by developing the mathematical foundations for Quantum AIXI (QAIXI), an AGI model operating within the principles of quantum mechanics. Using the formal theory of quantum information processing, we extend prior work on quantum intelligence \cite{ozkural2015ultimate,catt2020} and situates QAIXI as an agent anchored in both quantum and classical registers interacting with the environment via quantum and classical channels. Using this formulation, we contribute the following:
\begin{enumerate}[label=(\alph*)]
\item Describing the QAIXI agent's interaction loop using density operators and quantum measurement theory along with specifying quantum Kolmogorov complexity ($K_Q$) \cite{berthiaume2001quantum,vitanyi2002quantum} in terms of quantum information channels and its role in a universal quantum prior.
    \item Utilising the theory of quantum-to-quantum (QTQ) and quantum-to-classical channels \cite{watrous_theory_2018} to describe the space of possible quantum environments.
    \item Formulating a quantum equivalent of the universal Bayesian mixture over quantum environments while adapting a quantum Bellman equation and the QAIXI policy based on quantum formalism.
    \item Describing circumstances in which the Kochen-Specker theorem \cite{kochenSpecker1967} renders any QAIXI history fundamentally contextual.
\end{enumerate}
\vspace{-0.5em}
QAIXI is impractical to implement. Our contribution therefore aims to contribute to debate over the theoretical models of optimal intelligence that account for the fundamentally quantum nature of the universe, particularly limitations stemming from the inherent properties of quantum state spaces, quantum computation, and quantum measurement. Appendices can be found via \cite{perrier2025qaxirefs}.
\vspace{-2em}

\subsection{Related Work}
\vspace{-1em}
AIXI \cite{hutter2004universal,hutter2007,hutter2010} is among the leading AGI proposals at the centre of debates over AGI, as both a model of universal intelligence and a proposal against which other proposals are assessed.  It takes Solomonoff’s universal induction, folds it into sequential decision theory, and produces what is arguably the cleanest statement of an unbounded optimal agent.  Subsequent symbolic, connectionist, and hybrid AGI proposals typically position themselves by (i) trying to approximate AIXI in practice (e.g.\ AIXItl, MC-AIXI, information-geometric variants) or (ii) critiquing AIXI’s reliance on classical assumptions and incomputable quantities.  Logicist programmes \cite{bennettmaruyama2022b}, emergentist lines \cite{sole2019}, distributed cognitive architectures \cite{goertzel2021,goertzel2023}, and self-organising approaches \cite{mcmillen2024} all share AIXI’s basic agent–environment model, albeit while extending, varying or removing its component Bayesian machinery.  
AIXI has been subject to critical review over the last two decades on a number of grounds: (1) physical unrealisability or super-Turing requirements \cite{wang2015assumptions,leike2015}; (2) the Cartesian boundary between software and hardware \cite{bennett2023b,bennett2024appendices}; (3) inadequate treatment of resource constraints, counterfactual reasoning or multi-agent reflection \cite{soares2015two,fallenstein2015reflective}; and (4) challenges to the Kolmogorov-based prior itself \cite{thorisson2016artificial}.  Even so, AIXI remains a cornerstone of classical AGI (CAGI) theories. The advent of quantum information technologies (QIT) \cite{NielsenChuang2010,aaronson2013quantum} and interest in their computational capabilities and quantum forms of AGI (QAGI) poses a natural question for AIXI models: if AIXI purports to be a universal agent in a \emph{classical} world, what replaces it in a \emph{quantum} universe?  Most quantum-AI research to date is narrowly cast, focusing on specific technical features such as quantum decision theory, quantum machine learning \cite{schuld_machine_2021,BiamonteEtAl2017,DunjkoBriegel2018,cerezo_challenges_2022,perrier2024quantum}, quantum reinforcement learning \cite{DongEtAl2006,jerbi_parametrized_2021,meyer2022survey}, and hybrid variational schemes \cite{schuld_evaluating_2019,SchuldPetruccione2018} (including tomography \cite{sarkar2022qksa}).  These tend to involve classical agents accessing quantum algorithms providing prospective quantum advantage (e.g.\ amplitude-estimation–based policy evaluation \cite{liu_rigorous_2021}), albeit with ongoing debate about their practical reach \cite{aaronson2014quantum}.  

Attempts to synthesise AIXI with quantum mechanics (and quantum information processing) have been relatively limited. Work on Solomonoff induction \cite{solomonoff1964} and quantum computation (e.g. Deutsch’s quantum Turing machines (QTM) \cite{deutsch1985}) have considered component features of the AIXI program, including substituting out specific classical components for quantum analogues (e.g. replacing classical with quantum Grover search \cite{catt2020}). Yet quantum mechanics operationally - and ontologically - carries profound differences from classical ontology (and computation). Quantum systems may subsist in superposition states, be entangled and lack definitive identity until measurement interaction. Moreover, any integration of quantum mechanics and AIXI must also reckon with seminal results in quantum foundations: contextuality, non-locality, and quantum measurement — all absent from classical AIXI.  Work on quantum causality \cite{brukner2014quantum,wiseman2016causarum}, algorithmic thermodynamics \cite{ebtekar2025foundations}, and many-worlds decision theory \cite{Wallace2012} offer contributions on foundational quantum mechanical issues facing quantum AIXI-style agents whose quantum operations disturb the very environment they probe.  At present, however, there is neither a consensus quantum prior (the role played by $2^{-\!K(\nu)}$ in AIXI) nor a tractable complexity class in which QAIXI could be approximated.

\section{Classical AIXI}
The AIXI agent \cite{hutter2004universal} is a theoretical model for a universally optimal reinforcement learning agent. It interacts with an unknown environment $\mu$ in cycles. In each cycle $t$, the agent chooses an action $a_t \in \mathcal{A}$ and receives a percept $e_t=(o_t, r_t) \in \mathcal{O} \times [0,1]$, consisting of an observation $o_t$ and a reward $r_t$. The history is $h_{<t} = a_1e_1\dots a_{t-1}e_{t-1}$. AIXI's policy $\pi^{AIXI}$ aims to maximise future expected rewards:
\begin{equation}\label{eq:AIXI_classical_redef}
  a_t := \pi^{\mathrm{AIXI}}(h_{<t})
  = \arg\max_{a_t}
    \sum_{\nu\in\Mcal_U} w_\nu(h_{<t})
    \sum_{e_t}\max_{a_{t+1}}\sum_{e_{t+1}}\dots\max_{a_m}\sum_{e_m}
        \left[\;\sum_{k=t}^{m}\gamma^{k-t}r_k\right]
        \prod_{j=t}^{m}\nu(e_j\mid h_{<j}a_j).
\end{equation}
where $m$ is the lifetime of the agent, $\gamma \in [0,1)$ is a discount factor, and the outer sum is over all environments $\nu$ in a universal class $\Mcal_U$ (e.g., all chronological semi-computable environments $\Mcal_{sol}$), weighted by the posterior $w_\nu(h_{<t}) \propto 2^{-K(\nu)}$ (see Table \ref{tab:symbols} in the Appendix). $K(\nu)$ is the Kolmogorov complexity of the classical Turing Machine (CTM) describing environment $\nu$. The sum over environments constitutes Solomonoff's universal prior $\xi_U$:
\begin{equation} \label{eq:xi_U_classical_redef}
    \xi_U(e_{1:m} || a_{1:m}) := \sum_{\nu \in \Mcal_{sol}} 2^{-K(\nu)} \nu(e_{1:m} || a_{1:m}).
\end{equation}
AIXI is Pareto-optimal and self-optimising in the sense of \cite{hutter2004universal}. However, it is incomputable due to the use of $K(\cdot)$ and the sum over all CTMs. Its ontological assumptions are classical: deterministic or classically stochastic environments, objective histories, and classical information theory. 
\vspace{-1em}

\section{Quantum Computational Foundations}
\vspace{-0.5em}
\subsection{Quantum information}
To construct QAIXI, we must replace classical computational notions with their quantum counterparts. However, this is not a straightforward isomorphic mapping. The unique properties of quantum mechanics problematise conventional assumptions underlying classical AIXI, such as agent identity, definitiveness of state and the separability and distinguishability of an agent from its environment. To formulate QAIXI in a way that accommodates and caters for these ontological differences, we turn to the language and formalism of quantum information theory \cite{watrous_theory_2018}. In QIP, \textit{systems} (agents) and \textit{environments} are described in terms of \textit{registers} $\texttt{X}$ (e.g. bits) comprising information drawn from a classical alphabet $\Sigma$. Registers may be in either classical or quantum states. We define a Hilbert space $\mathcal X=\mathbb C^{|\Sigma|}$ with computational basis 
$\{|s\rangle\}_{s\in\Sigma}$. A \textit{quantum state} of a register \texttt{X} (associated with space $\mathcal{X}$) is a density operator $\rho \in \mathcal{D}(\mathcal{X})$, i.e., a positive semi-definite operator with $\Tr(\rho)=1$. Classical states are then $\rho$ which are diagonal in the basis of $\mathcal{X}$. Interactions with (and changes to) states occur via \textit{channels} which are superoperators. They define how quantum and classical states interact. Classical-to-classical registers (CTC) preserve classical states, classical-to-quantum (CTQ) channels encode classical information in quantum states, quantum-to-classical (QTC) channels extract classical information from quantum states, decohering them in the process; quantum-to-quantum (QTQ) channels form coherent (e.g. unitary) transformations between quantum registers. In this framing, both agents and environments are registers which may be CAGI (classical state sets) or QAGI (quantum state sets). They may interact in ways that are coherent (quantum-preserving) or classical (via CTC or QTC maps).

\subsection{Quantum AIXI} \label{sec:qagi-models}
The components of QAIXI can then be understood as follows. Let the QAIXI agent be associated with a quantum register \texttt{A} representing the internal degrees of freedom, while its environment is represented by a quantum register \texttt{E}.  Their corresponding (finite–dimensional) complex Hilbert spaces are denoted by $\mathcal H_{A}$ and $\mathcal H_{E}$.  At interaction step~$t$ the agent's private state is a density operator $\rho_{A}^{(t)}\in\mathcal D(\mathcal H_{A})$ that may encode its complete history, its current belief state, or—in the ideal case—an explicit representation of the universal quantum mixture~$\Xi_{Q}$.  The environment is simultaneously described by $\rho_{E}^{(t)}\in\mathcal D(\mathcal H_{E})$.  Taken together we assume the composite system occupies the joint state $\rho_{AE}^{(t)}\in\mathcal D(\mathcal H_{A}\otimes\mathcal H_{E})$. Note that $\rho_{AE}^{(t)}$ can be entangled, so that $\rho_{AE}^{(t)}\neq\rho_{A}^{(t)}\otimes\rho_{E}^{(t)}$ (see App. \ref{app:entangled-loop}). A history is therefore an operator‐valued stochastic process rather than a sequence of point events.
\\
\\
\textbf{Actions}.
At each cycle the agent chooses an \emph{action} $a_{t}$, formally a completely–positive, trace–preserving (CPTP) map acting on the environment register (or on a designated subsystem of the joint register).  The two canonical cases are as follows.  
\begin{enumerate}
    \item First, when the agent performs a coherent control operation, $a_{t}$ is realised by a unitary channel $\Phi_{U_{a_{t}}}\!:X\mapsto U_{a_{t}}\,X\,U_{a_{t}}^{\dagger}$, applied either on $\mathcal H_{E}$ alone or on $\mathcal H_{A}\otimes\mathcal H_{E}$.  Such a map preserves superposition and entanglement and therefore belongs to the quantum‑to‑quantum (QTQ) class of channels.
    \item Second, the agent may decide to interrogate the environment by means of a \emph{quantum instrument} $\mathcal I_{a_{t}}=\{\mathcal E^{a_{t}}_{k}\}_{k\in\Gamma_{\mathrm{obs}}}$, where the outcome alphabet $\Gamma_{\mathrm{obs}}$ is classical.  Each branch map $\mathcal E^{a_{t}}_{k}:\mathcal L(\mathcal H_{E'})\to\mathcal L(\mathcal H_{E'})$ is completely positive and trace non–increasing, satisfies $\sum_{k}\mathcal E^{a_{t}}_{k} = \mathcal E^{a_{t}}$ (their sum is a CPTP map) and $\mathcal E^{a_{t}\dagger}_{k}(\mathbb I)=\mathbb I$, and typically takes the form $\mathcal E^{a_{t}}_{k}(X)=M^{a_{t}}_{k}\,X\,M^{a_{t}}_{k}\!{}^{\dagger}$ for a POVM $\{M^{a_{t}}_{k}\}$.  Because every branch deposits a classical record $k$ while decohering the measured subsystem, $\mathcal I_{a_{t}}$ is a paradigmatic quantum‑to‑classical (QTC) channel.
\end{enumerate}
Formally the action $a_t$ can be represented via $\Phi_{a_{t}}\;:\;\mathcal L(\mathcal H_{AE})\;\longrightarrow\;\mathcal L(\mathcal H_{AE})$ which may be indexed by the form of $\mathcal L$.
\\
\\
\textbf{Percepts and rewards}.  What the QAIXI agent perceives is determined entirely by the instrument it applies.  A measurement outcome $k\in\Gamma_{\mathrm{obs}}$ becomes the observation component $o_{t}$ of the percept $e_{t}=(o_{t},r_{t})$.  The outcome is drawn with probability: 
\[
  \Pr(k) := \Pr\bigl(o_{t}=k\mid a_{t},\rho_{AE}^{(t-1)}\bigr)
    =\operatorname{Tr}\bigl[\mathcal E^{a_{t}}_{k}\bigl(\operatorname{Tr}_{A}\rho_{AE}^{(t-1)}\bigr)\bigr].
\]
The reward $r_{t}\in\Gamma_{\mathrm{rew}}$ (with $\Gamma_{\mathrm{rew}}\subseteq\mathbb R$ in the standard reinforcement‑learning setting) is computed by a classical post‑processing function that may depend on both $o_{t}$ and the agent's prior internal state $\rho_{A}^{(t-1)}$.  The resulting data $o_{t}$ and $r_{t}$ are stored in designated registers - but it is at this stage \textit{classical} data. So it can be stored in a classical register via a CTC map, or encoded in a quantum register via a CTQ map. 
\\
\\
\textbf{Interaction loop}. The QAIXI cycle of interaction with the environment is as follows.  Given the pre‑interaction state $\rho_{AE}^{(t-1)}$, the agent selects $a_{t}$.  If $a_{t}$ is a unitary, the composite state updates coherently to $\rho_{AE}^{(t)}=\Phi_{U_{a_{t}}}\bigl(\rho_{AE}^{(t-1)}\bigr)$.  If instead $a_{t}$ is an instrument, an outcome $k$ is observed with the above probability, the environment becomes: 
\begin{align}
  \rho_{E'}^{(t)}=\Tr_A \rho_{AE}^{(t)}
    =\frac{\mathcal E^{a_{t}}_{k}\bigl(\operatorname{Tr}_{A}\rho_{AE}^{(t-1)}\bigr)}{\Pr(k)}.
\end{align}
The global post‑measurement state is $\rho_{AE}^{(t)}=\rho_{A}^{(t-1)}\otimes\rho_{E'}^{(t)}$ (assuming no entanglement - see Appendix \ref{app:entangled-loop} for the general case).  Finally, the agent applies an internal CPTP map—its update rule—to obtain $\rho_{A}^{(t)}$ from $\rho_{A}^{(t-1)}$ in the light of $(a_{t},o_{t},r_{t})$.  This update may be trivial (identity) if $\rho_{A}^{(t)}$ is purely classical or it may itself be a non‑trivial quantum channel when the agent maintains coherent beliefs. The agent’s internal memory is refreshed by a chosen CPTP map $\mathsf U_{\!\mathrm{int}}:\rho_{A}^{(t-1)}\mapsto\rho_{A}^{(t)}$ that may itself depend on $(a_{t},k,r_{t})$.  This operational definition fixes the concepts of \emph{actions}, \emph{percepts}, and \emph{rewards} in the quantum setting, providing the concrete substrate on which the universal mixture $\xi_{Q}$ and the QAIXI value functional are built.
\vspace{-1em}

\subsection{Quantum Kolmogorov complexity} QAIXI relies upon quantum \cite{berthiaume2001quantum} (rather than classical) Kolmogorov complexity. Let $\mathcal Q_{\!\mathrm{sol}}$ be the set of all chronological, semi–computable quantum environments, each such environment $Q\!\in\!\mathcal Q_{\!\mathrm{sol}}$ being represented by a QTM that outputs an instrument sequence. An \emph{environment} $Q$ is a CPTP map acting on a register
$\mathcal{H}_E$. We represent each quantum environment
\(Q:\mathcal L(\mathcal H_E)\!\to\!\mathcal L(\mathcal H_E)\)
by its Choi–Jamiolkowski vector  \cite{haapasalo2021choi} to identify the channel $Q: \mathcal{L}(\mathcal H_E) \to (\mathcal H_E)$ with the purified vector
\(
\bigl|Q\bigr\rangle
:= (\mathbb{I}\otimes Q)\bigl|\Phi^+\bigr\rangle
\in \mathcal{H}_E^{\otimes 2},
\)
where $|\Phi^+\rangle =
\frac1{\sqrt{d}}\sum_{i=1}^{d}|i\,i\rangle$
for $d=\dim\mathcal{H}_E$.  All trace-distance
bounds on $|Q\rangle$ translate to diamond-norm bounds on $Q$. Its quantum Kolmogorov complexity is: 
\begin{equation}
\label{eq:QKC}
K_Q(Q)\;:=\;
\min_{p\in\{0,1\}^\star}\Bigl\{|p| \;\Bigl|\;
\bigl\|U_{\text{univ}}\,|p\rangle|0\rangle -
|Q\rangle\otimes|{\rm aux}\rangle \bigr\| \le \varepsilon\Bigr\},
\end{equation}
for a universal QTM, $U_{\text{univ}}$, and fixed $\varepsilon<1$ with Hilbert-Schmidt norm $||\cdot||$. Equation~\eqref{eq:QKC} therefore
measures program length needed to approximate the channel $Q$. The $|\mathrm{aux}\rangle$ term is an ancilla whose dimension is at most
polynomial in $|p|$. Here we have assumed that for any two universal QTMs $U_1,U_2$ there exists a constant
$c_{U_1,U_2}$ such that for every environment $Q$ we have $\bigl|K^{U_1}_Q(Q)-K^{U_2}_Q(Q)\bigr|\;\le\;c_{U_1,U_2}$.
The quantum Solomonoff mixture is the semi–density operator:
\begin{equation}\label{eq:xiQ‐density}
  \Xi_{Q}\bigl(a_{1:m}\bigr)
  \;:=\;\sum_{Q\in\mathcal Q_{\!\mathrm{sol}}}2^{-K_{Q}(Q)}\,\rho^{Q}_{E}(a_{1:m}),
\end{equation}
where $\rho^{Q}_{E}(a_{1:m})$ is the environment state generated by $Q$ under the action sequence $a_{1:m}$ and $\Tr \Xi_{Q}\bigl(a_{1:m}\bigr) \leq 1$.
Projecting $\Xi_{Q}$ onto any classical POVM recovers the probability mixture
$\xi_{Q}(e_{1:m}\Vert a_{1:m})$ that generalises \eqref{eq:xi_U_classical_redef}. Whenever every admissible environment $Q$ outputs commuting observables and the QAIXI agent restricts itself to instruments diagonal in that basis, each $\rho^{Q}_{E}(a_{1:m})$ becomes a diagonal density operator encoding an ordinary probability measure.  In that limit $K_{Q}(Q)$ coincides (up to a constant) with the classical Kolmogorov complexity $K(\nu)$ of the induced CTM $\nu$, while $\xi_Q$ reduces to the classical Solomonoff prior $\xi_{U}$ and the AIXI policy.  Hence the quantum formulation generalises the classical one (for a discussion of complexity and resource-constraints, see \cite{ozkural2016ultimate}).

\subsection{QAIXI value functional}  We also define the quantum analogue of the Bellman equation. For a policy $\pi$ and environment $Q$ define the discounted return
\begin{equation}
  V^{\pi}_{Q}\bigl(\rho_{AE}^{(t-1)}\bigr)
  =\;\mathbb E_{\pi,Q}\Bigl[\,\sum_{k=t}^{m}\gamma^{k-t}r_{k}\;\Bigm|\;\rho_{AE}^{(t-1)}\Bigr].
\end{equation}
Because each branch map $E^{a_t}_k$ both yields $o_t$ and
updates $\rho^{(t)}_E(k)$, the expectation
$E_{\pi,Q}\!\left[\sum_{k=t}^m\gamma^{k-t}r_k\right]$ is taken over the (non-Markovian) instrument-conditioned trajectory measure, not over an
i.i.d.\ sequence. It terminates at finite horizon $m$.
Averaging over the posterior universal mixture gives:
\begin{equation}\label{eq:qaixi‐value}
  V^{\pi}_{\Xi_Q}\bigl(\rho_{AE}^{(t-1)}\bigr)
  =\sum_{Q\in\mathcal Q_{\!\mathrm{sol}}}
       w_Q(h_{<t})\;
       V^{\pi}_{Q}\bigl(\rho_{AE}^{(t-1)}\bigr),
\end{equation}
where $w_Q(h_{<t}) \propto 2^{-K_Q(Q)}     \prod_{i=1}^{t-1}\! \nu_Q(e_i\mid h_{<i}a_i)$ is the normalised Bayesian weight after $t-1$ cycles (with $\nu_Q$ the classical probability distribution from measuring $Q$). The \emph{QAIXI policy} maximises this functional at every step:
\begin{equation}\label{eq:qaixi-policy}
  a_t := \pi^{\text{QAIXI}}\bigl(\rho_{AE}^{(t-1)}\bigr)
  = \arg\max_{a_t\in\mathsf{Act}}
      \mathbb E_{k\sim p_t(\cdot\mid a_t)}
        \Bigl[r_t(k)+\gamma\,V^{\text{QAIXI}}_{\Xi_Q}\bigl(\rho_{AE}^{(t)}(k)\bigr)\Bigr],
\end{equation}
Here $\rho^{(t)}_{AE}(k)$ is the post‑measurement state with probabilities as $p_{t}(k \mid a_{t})
  \;=\;
  \operatorname{Tr}\!\bigl[  \mathcal{E}^{a_{t}}_{k}\bigl(\,\Xi_{Q}^{(t)}(a_{1:t-1})\bigr)
  \bigr]$.  The fix‑point equations implicit in Eqns. (\ref{eq:qaixi‐value}-\ref{eq:qaixi-policy}) form the \emph{quantum Bellman equation}. Moreover, the act of observation fundamentally alters $\rho_E$ due to quantum back-action (hence the need to use the updated state). Note that as we are dealing with quantum trajectories, the classical notion of histories $h_{<t}$ must now account for acts of measurement (via instruments) themselves: different measurement choices lead to different post-measurement states and therefore different future dynamics.
The substitution $K\!\to\!K_{Q}$ still leaves the mixture \eqref{eq:xiQ‐density} incomputable. It also introduces an additional obstacle: evaluating $\rho^{Q}_{E}(a_{1:m})$ may be computationally hard even for $m=1$. In certain cases, preparing an arbitrary $n$-qubit quantum state $\ket{\psi}$ may require a quantum circuit of size exponential in $n$ \cite{knill1995approximation}. This implies that $K_Q(\ket{\psi})$ can be exponentially larger (in qubit count) than $n$ for complex states, unlike classical $K(x)$ which is at most the uncompressed bit-length $\ell(x) + c$ (assuming $x$ is the program itself, not an arbitrary binary string) \cite{li2019introduction}. The upper bound is still $O(2^n)$ bits. Practical approximations therefore hinge on identifying structured subclasses of $\mathcal Q_{\!\mathrm{sol}}$ (e.g. stabiliser processes or tensor‑network environments) for which both $K_{Q}$ and the Born probabilities are efficiently approximable.
\vspace{-1em}
%
%
\section{Quantum Solomonoff Induction (QSI)}
\label{sec:qsi}
\vspace{-0.5em}
Classical Solomonoff induction assigns to every finite data string a
universal a-priori probability obtained by summing, with
complexity–based weights, over all computable environments that could
have produced that string.  In our quantum setting the data arriving at
an agent are the classical outcomes of \emph{instrument branches}
executed on a quantum environment.  Quantum Solomonoff
induction therefore replaces probability measures by
\emph{semi-density operators} and Kolmogorov codes by \emph{quantum
program states}.  
\\
\\
\textbf{Environment class and universal mixture}. 
Let $\mathcal Q_{\!\mathrm{sol}}$ denote the set of all chronological,
semi-computable quantum environments (quantum channels generated sequentially by a QTM).  Each
$Q\in\mathcal Q_{\!\mathrm{sol}}$ is specified by a program
$p(Q)\in\{0,1\}^{*}$ for a fixed universal QTM
$U_{\mathrm{univ}}$. Running $p(Q)$ produces, step by step, a sequence
of CPTP maps (quantum channels) that act on the environment register and a POVM \emph{measurement specification} for the classical transcript. Here $Q$ plays the role of $\nu$ in the classical case.  Its
quantum Kolmogorov complexity is
\(
  K_{Q}(Q):=\min\{|p(Q)|:U_{\mathrm{univ}}(p(Q))\simeq Q\}.
\)
For an action sequence $a_{1:m}$ we write
\(
  \rho_{E}^{Q}(a_{1:m})\in\mathcal D(\mathcal H_{E})
\)
for the (generally mixed) state that $Q$ prepares on the environment
register immediately before the observation at cycle~$m$. The \emph{universal semi-density operator} conditional on the agent’s
actions is then:
\begin{equation}
  \label{eq:XiQ-density}
  \Xi_{Q}(a_{1:m})
  :=
  \sum_{Q\in\mathcal Q_{\!\mathrm{sol}}}
    2^{-K_{Q}(Q)}\,
    \rho_{E}^{Q}(a_{1:m}),
  \qquad
  0<\Tr\bigl[\Xi_{Q}(a_{1:m})\bigr]\le 1.
\end{equation}
Projecting $\Xi_{Q}$ onto the POVM that
implements the agent’s instrument at cycle~$m$ yields the scalar
\(
  \xi_{Q}(e_{1:m}\Vert a_{1:m})
  :=
  \Tr\!\bigl[
        M_{e_{1:m}}\,\Xi_{Q}(a_{1:m})
      \bigr],
\)
which reduces to the classical Solomonoff prior
$\xi_{U}$ when all $M_{e_{1:m}}$ commute. The trace gives the probability of seeing that outcome sequence given the action sequence $a_{1:m}$.
\\
\\
\textbf{Bayesian updates}. Given a history
$h_{<t}=(a_{1:t-1},e_{1:t-1})$, the posterior semi-density operator is
obtained by the update followed by a renormalisation:
\begin{equation}
  \label{eq:posteriorupdate}
  \Xi_{Q}^{(t)}\!(a_{1:t-1})
  :=
  \frac{
     \mathcal M_{e_{t-1}}\!\bigl(\Xi_{Q}^{(t-1)}(a_{1:t-2})\bigr)
  }{\Tr[
        \mathcal M_{e_{t-1}}\!\bigl(\Xi_{Q}^{(t-1)}(a_{1:t-2})\bigr)
      ]},
  \qquad
  \Xi_{Q}^{(0)}:=\Xi_{Q}.
\end{equation}
Here $\mathcal M_{e_{t-1}}$ is the CP map of the realised measurement branch of the
agent’s instrument at step~$t-1$.  The distribution for the
next observation is:
\begin{align}
  \xi_{Q}(\,\cdot\,\Vert h_{<t},a_t)
  =k\mapsto
   \Tr\bigl[\mathcal E^{a_t}_{k}(
             \Xi_{Q}^{(t)}(a_{1:t-1})
           )\bigr].
\end{align}

\subsection{Convergence theorem}
Analogous to classical Solomonoff induction, under certain conditions we might expect QSI to exhibit convergence properties. The convergence properties of QSI are complex and remain an open question. To sketch out the issues, we consider the following model (see Appendix \ref{app:qsi-convergence-elaboration} for more detail). Write $D(\rho\Vert\sigma)=\Tr[\rho(\ln\rho-\ln\sigma)]$ for the
\emph{Umegaki relative entropy}.  Fix a true quantum environment
$Q^{\star}\in\mathcal Q_{\!\mathrm{sol}}$ and let
$\rho_{E}^{\star}(a_{1:m})$ be its state sequence.  We assume:

\smallskip
\noindent
\emph{(C1) Ergodicity.}  
There is a $\delta>0$ such that for every admissible action policy the
time-averaged state satisfies:
$
  \liminf_{m\to\infty}
    \tfrac1m\sum_{k=1}^{m}
      \bigl\lVert\rho_{E}^{\star}(a_{1:k})-
                     \rho_{E}^{\star}(a_{1:k-1})\bigr\rVert_{1}
  \leq\delta.
$

\noindent
\emph{(C2) Informational completeness.}  
Each cycle’s instrument has a POVM refinement whose classical
Fisher information matrix is full-rank up to error $\epsilon>0$.

\noindent
\emph{(C3) Complexity gap finite.}  
$K_{Q}(Q^{\star})<\infty$ and
$
  g:=\sum\nolimits_{Q\neq Q^{\star}}
       2^{-\bigl(K_{Q}(Q)-K_{Q}(Q^{\star})\bigr)}
  <\infty.
$

\begin{theorem}[QSI convergence]
\label{thm:qsi-convergence}
Under \emph{(C1)}–\emph{(C3)} the posterior density operator satisfies
\begin{align}
  \mathbb E_{Q^{\star}}\!\bigl[
     D\!\bigl(
        \rho_{E}^{\star}(a_{1:t})
        \,\big\Vert\,
        \Xi_{Q}^{(t)}(a_{1:t})
     \bigr)
  \bigr]
  \;\le\;
  \frac{K_{Q}(Q^{\star})\ln 2+\ln(1+g)}{\!\!t}. \label{eqn:QSI:posteriordensity}
\end{align}
Consequently, by the quantum Pinsker inequality \cite{csiszar2011information,hirota2020application} (total variation distance given by KL-divergence),
\begin{align}
  \mathbb E_{Q^{\star}}\!\bigl[
     \tfrac12
     \bigl\lVert
       \rho_{E}^{\star}(a_{1:t})-\Xi_{Q}^{(t)}(a_{1:t})
     \bigr\rVert_{1}
  \bigr]
  =\mathcal O\!\bigl(t^{-1/2}\bigr).
\end{align}
\end{theorem}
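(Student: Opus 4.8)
The plan is to lift the classical Solomonoff convergence argument — whose engine is the multiplicative dominance $\xi_{U}\ge 2^{-K(\mu)}\mu$ together with a telescoping of the log-likelihood ratio — into the operator setting, replacing scalar dominance by a positive–semidefinite operator inequality and the scalar KL divergence by the Umegaki relative entropy $D$. First I would establish \emph{quantum dominance}: since every summand in \eqref{eq:XiQ-density} is positive semidefinite, discarding all terms except $Q=Q^{\star}$ gives the operator inequality $\Xi_{Q}(a_{1:m})\succeq 2^{-K_{Q}(Q^{\star})}\rho_{E}^{\star}(a_{1:m})$, while (C3) controls the residual mass by the factor $1+g$. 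By operator monotonicity of the logarithm this yields the pointwise bound $\ln\Xi_{Q}(a_{1:m})\succeq \ln\rho_{E}^{\star}(a_{1:m})-\bigl(K_{Q}(Q^{\star})\ln 2+\ln(1+g)\bigr)\mathbb{I}$, hence an upper bound on $D\bigl(\rho_{E}^{\star}\Vert\Xi_{Q}\bigr)$ at the level of the full trajectory, with the numerator of \eqref{eqn:QSI:posteriordensity} as its value.

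Next I would propagate this bound through the Bayesian posterior update \eqref{eq:posteriorupdate}. The realised branch maps $\mathcal M_{e_{t-1}}$ are completely positive, so the \emph{data-processing inequality} for Umegaki relative entropy guarantees that each update is non-expansive; accumulating these contractions across the $t$ cycles distributes the total trajectory bound across time. Condition (C2), via the full-rank Fisher information of each refined instrument, ensures every cycle is genuinely discriminating up to error $\epsilon$, so the posterior cannot stall; and the ergodic drift control (C1) bounds the per-step change $\lVert\rho_{E}^{\star}(a_{1:k})-\rho_{E}^{\star}(a_{1:k-1})\rVert_{1}$, preventing the target state from running away faster than the posterior can track it. Together these convert the fixed trajectory budget $K_{Q}(Q^{\star})\ln 2+\ln(1+g)$ into the $1/t$ averaged bound.

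I expect the main obstacle to be the quantum telescoping step. Classically the chain rule $\ln(\mu/\xi)(x_{1:t})=\sum_{k}\ln(\mu/\xi)(x_{k}\mid x_{<k})$ is exact, but the non-commutativity of the posterior operators $\Xi_{Q}^{(t)}$ obstructs any literal operator analogue, since $\ln(\rho\sigma)\neq\ln\rho+\ln\sigma$ in general. I would circumvent this by working with the \emph{measured} relative entropy along the classical transcript produced by the instruments, which factorises cleanly, and invoking monotonicity under the instrument channels rather than seeking an exact operator chain rule; the measurement back-action is absorbed into the CP maps $\mathcal M_{e_{t-1}}$, and (C2) guarantees that the measured and Umegaki quantities agree up to the stated $\epsilon$-error, closing the gap between the classically telescoped bound and the operator statement \eqref{eqn:QSI:posteriordensity}.

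Finally, the trace-distance corollary follows from the quantum Pinsker inequality $D(\rho\Vert\sigma)\ge\tfrac12\lVert\rho-\sigma\rVert_{1}^{2}$: taking expectations over $Q^{\star}$, applying Jensen's inequality to the concave square root, and substituting \eqref{eqn:QSI:posteriordensity} gives $\mathbb{E}_{Q^{\star}}\bigl[\tfrac12\lVert\rho_{E}^{\star}(a_{1:t})-\Xi_{Q}^{(t)}(a_{1:t})\rVert_{1}\bigr]\le\sqrt{\tfrac12\,\mathbb{E}_{Q^{\star}}[D]}=\mathcal O(t^{-1/2})$, as claimed.
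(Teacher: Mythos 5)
Your proposal follows essentially the same route as the paper's sketch and its elaboration in Appendix~\ref{app:qsi-convergence-elaboration}: the operator dominance $\Xi_Q \succeq 2^{-K_Q(Q^\star)}\rho_E^\star$ together with the residual mass $1+g$ from (C3) gives the initial budget $K_Q(Q^\star)\ln 2+\ln(1+g)$, exactly as in the appendix's computation of $D_0$; monotonicity of the Umegaki relative entropy under the branch maps $\mathcal M_{e_k}$ spreads that budget over $t$ cycles; and Pinsker plus Jensen yields the $\mathcal O(t^{-1/2})$ trace-distance rate. The one genuine point of departure is your treatment of the telescoping step: the paper invokes a quantum chain rule for relative entropy and posits that $D(\rho_E^\star\Vert\Xi_Q)-D(\Lambda_k^\star\Vert\Lambda_k^\Xi)$ is a martingale, whereas you telescope the \emph{measured} relative entropy along the classical transcript and use (C2) to control its gap to the Umegaki quantity. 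That is arguably the more defensible device, since it avoids relying on an operator chain rule that does not hold exactly; but it inherits the same unresolved gap the paper itself flags as open, namely how a bound on the \emph{sum} of per-step conditional divergences converts into the claimed bound on the divergence of the time-$t$ posterior divided by $t$ --- ``distributes the total trajectory bound across time'' is asserted rather than proved in both versions, and neither makes precise use of (C1). Given that the paper explicitly presents its argument as a sketch of an open problem, your proposal is at the same level of rigour and reproduces its structure faithfully.
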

\textit{Sketch of proof}.
Proving QSI convergence is an open question. One potential avenue is as follows. Define likelihood operators
$
  \Lambda_{t}^{Q}
  :=\mathcal M_{e_{t}}\circ\dots\circ\mathcal M_{e_{1}}
    \bigl(\rho_{E}^{Q}(a_{1:t})\bigr).
$, with $\Lambda^\Xi_k$ defined similarly substituting in $\Xi^{(0)}_{Q}(a_{1:t})$. $\Tr\Lambda_{t}^{Q}$ gives the joint Born probabilities $\Pr_Q(e_1,...,e_t\Vert a_1,...,a_t)$ reflecting the likelihood of the observed trajectory. Akin to the classical Solomonoff case, monotonicity of quantum relative entropy (see \cite{watrous_theory_2018}) implies that applying any
branch map $\mathcal M_{e_k}$ can only \emph{decrease} divergence i.e. $D\!\bigl(\rho_{E}^{\star}\,\Vert\,\Xi_Q\bigr)
      \;-\;
D\!\bigl(\Lambda^{\star}_{k}\,\Vert\,\Lambda^{\Xi}_{k}\bigr)$ is $\mathbb E_{Q^{\star}}\!\text{-martingale}$.
Condition (C2) guarantees that each step’s expected drop is
non-negative.  Summing these expected drops from $k=1$ up to $k=t$
therefore shows that the \emph{average} divergence after $t$ cycles is
bounded by the initial one,
\[
\mathbb E_{Q^{\star}}\!\Bigl[
     D\!\bigl(\rho_{E}^{\star}\,\Vert\,\Xi_Q\bigr)
\Bigr]
  \;\le\;
\frac{K_Q(Q^{\star})\ln 2+\ln(1+g)}{t}.
\]
Finally, the quantum Pinsker inequality \cite{csiszar2011information} turns this $1/t$ bound on
relative entropy into an $\mathcal O(t^{-1/2})$ bound on trace
distance, completing the convergence claim. The replacement of full informational completeness by
$\epsilon$-completeness introduces an extra $O(\epsilon)$ term in the
Pinsker bound, which vanishes as $\epsilon\!\to\!0$.
Equation~\eqref{eq:posteriorupdate} is executed by the internal
belief-revision QTQ channel $\mathsf U_{\!\mathrm{int}}$.  The posterior $\Xi_{Q}^{(t)}$ supplies
the conditional expectation required for the quantum Bellman
equation.  QSI is the inductive
basis of QAIXI concentrating weight on environments that
remain compatible with classical measurements. In the commuting limit the theorem reduces to the classical Solomonoff
convergence bound~\cite{hutter2004universal}. A formal proof would need to justify martingale convergence for operator-valued universal mixtures under quantum measurement dynamics (App. \ref{app:qsi-convergence-elaboration}). \vspace{-1em} 

\subsection{QSI Limitations}
Aside from the computational complexity considerations of QSI (which we postpone for future work), several other significant challenges arise with QSI:
\begin{enumerate}
    \item \textit{Specifying $\mathcal Q_{\!\mathrm{sol}}$ and $K_{Q}(\cdot)$}. 
          A universal class of \emph{computable quantum environments} must be fixed, e.g.\ the set of all chronological, semi–computable QTMs that output instrument sequences.  
          The quantum description-length
          \(
            K_{Q}(Q)
          \)
          is then the length of the shortest qubit-string that drives a fixed universal QTM to approximate the environment $Q$ within trace-distance~$\varepsilon$.  
          Proving universality and invariance up to an additive constant—standard for classical Kolmogorov complexity—is technically subtler in the quantum setting because programs may themselves be in superposition.  Moreover, if an environment is weakly entangled, this could let a classical AIXI approximate QAIXI’s value in practice. 
    \item \textit{Measurement back-action}.  
          Born probabilities
          \(
            \Pr_{Q}(e_{1:m}\Vert a_{1:m})
          \)
          are computed from the iterated CPTP maps that model the interaction loop
          (Sec.~\ref{sec:qagi-models}).  
          Each measurement branch both yields the classical outcome $o_t$ and
          updates the post-measurement state  
          \(
            \rho_{E}^{(t)} \mapsto \rho_{E}^{(t+1)}(o_t)
          \). Because measurement alters $\rho_E$, the quantum back-action makes the expectation in the Bellman equation dependent on the sequence of quantum operations and classical outcomes in a complex, non-Markovian way, so the likelihood of future data depends on the entire history  
          \(h_{<t}a_t\).

    \item \textit{Non-locality and contextuality}.  
          When $Q$ prepares entangled states, the joint distribution  
          \(
            \Pr_{Q}(e_{1:m}\Vert a_{1:m})
          \)
          can violate Bell inequalities \cite{bell_speakable_2004} and exhibit contextual dependence on the full
          instrument sequence.  QSI must therefore aggregate over models that are not representable by any classical hidden-variable process, which complicates identifiability and convergence analyses.

    \item \textit{Incomputability and resource sensitivity of the prior}.  
          Like its classical counterpart, $K_{Q}$ is uncomputable.  
          Moreover, the weight  
          \(2^{-K_{Q}(Q)}\)  
          does not penalise the physical resources needed either to prepare the initial
          state of~$Q$ or to simulate its dynamics—tasks that can be exponential-time or generally infeasible.  
          For physically realistic prediction one may replace $K_{Q}$ by a resource-aware
          complexity measure that incorporates, for example, circuit depth or Trotter
          step count. 
          \item \textit{Quantum error correction (QEC)}. Implementing QAIXI on real, noisy hardware requires quantum error correction. Encoding each logical qubit in hundreds-to-thousands of physical qubits both (i) inflates the effective description-length, shifting the universal prior toward far simpler environments, and (ii) introduces continual syndrome-measurement back-action whose residual logical noise is only bounded below a code-specific threshold—together, these effects raise the resource bar and weaken the assumptions (ergodicity and informational completeness) behind potential convergence guarantees.
\end{enumerate}
Moreover, there are specific challenges to QAIXI and QSI raised by quantum foundational issues.\\
\\
\textbf{Bell non-locality}.  
      In any QAIXI environment that distributes entangled subsystems
      to space-like separated agent components the joint percept
      distribution can violate the CHSH inequality.  
      No classical hidden-variable environment
      $\nu$ can reproduce those statistics. A classical $\xi_{U}$ assigns zero mass to each Bell-violating local hypothesis, so only the quantum mixture $\xi_{Q}$ has non-trivial mass to contribute.  
      When the agent exploits Bell-type correlations for
      decision-making, its percepts are no longer
      conditionally-independent given the entire history, breaking the
      martingale structure assumed in Theorem \ref{thm:qsi-convergence}. See Appendix \ref{app:bell-no-cloning}.
\\
\\
\textbf{Kochen–Specker contextuality}.  
      If the true environment prepares a KS set of
      projectors, then there exists \emph{no} history-independent map
      $v:\mathcal P(\Hcal_{E})\!\to\!\{0,1\}$ assigning pre-existing
      outcomes to every measurement the agent may perform.  
      QSI must therefore sum over hypotheses whose outcome statistics
      depend on the entire future instrument sequence up to horizon $m$. Bayesian
      updates must track non-commuting observables (see Appendix \ref{app:contextuality}). We adapt this theorem for QAIXI below.

\begin{corollary}[QAIXI Contextuality]
\label{thm:contextual_predictor_full}
Let $\{P_1,\dots,P_n\}\subset\mathcal P(\Hcal_E)$ be a Kochen–Specker
uncolourable set whose projectors sum to the identity operator on $\Hcal_E$, i.e.\ there exists \emph{no} map
$v:\mathcal P(\Hcal_E)\!\to\!\{0,1\}$ satisfying
$\sum_{i=1}^{k}v(\Pi_i)=1$ for every projector decomposition
$\sum_{i=1}^{k}\Pi_i=\mathbb{I}$ that contains only elements of the set.
Let $\mathcal I$ be any instrument whose Kraus operators are polynomials in the $P_j$ and whose action is confined to the support $\Hcal_E$.
Then no quantum Turing machine~$Q$ can output, for every action
sequence $a_{1:m}$ an agent might perform, a commuting family of
projectors $\{Q_{a_{1:m}}(e_{1:m})\}$ such that
\begin{equation}\label{eq:perfect_prediction}
   \forall a_{1:m},\;
   \forall e_{1:m}\in\Gamma^{m}_{\mathrm{obs}}:\quad
   \Pr_{\rm env}(e_{1:m}\Vert a_{1:m})
   \;=\; 
   \Tr \!\bigl[
      Q_{a_{1:m}}(e_{1:m})\,
      \rho^\star_E
   \bigr],
\end{equation}
with $\rho^\star_E$ the true environment state prepared before cycle~$1$.
\end{corollary}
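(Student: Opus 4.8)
\emph{Proof idea.} The plan is to argue by contradiction: assume such a machine $Q$ and a \emph{globally} commuting family $\{Q_{a_{1:m}}(e_{1:m})\}$ (commuting across all action sequences and outcomes, which is precisely what renders the predictor non-contextual) reproduce every Born probability on the fixed state $\rho^\star_E$. Because the entire family commutes, I would simultaneously diagonalise it and pass to the induced classical model: let $\mathcal A$ be the abelian algebra generated by the $Q_{a_{1:m}}(e_{1:m})$, let $(\Omega,\mathbb P)$ be its joint spectrum equipped with the probability measure $\mathbb P$ induced by $\rho^\star_E$, and identify each projector $Q_{a_{1:m}}(e_{1:m})$ with the indicator of a measurable event $S_{a_{1:m}}(e_{1:m})\subseteq\Omega$ satisfying $\mathbb P\bigl(S_{a_{1:m}}(e_{1:m})\bigr)=\Tr[\,Q_{a_{1:m}}(e_{1:m})\,\rho^\star_E\,]$. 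The target is to distil from a single point $\omega\in\Omega$ a $\{0,1\}$-valued, context-independent colouring of $\{P_1,\dots,P_n\}$, contradicting its uncolourability.

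First I would treat single measurements ($m=1$). For each complete context $C=\{\Pi_1,\dots,\Pi_k\}\subset\{P_1,\dots,P_n\}$ with $\sum_i\Pi_i=\mathbb I$, the agent uses the instrument $\mathcal I$, whose Kraus operators are the $\Pi_i$ themselves and hence degree-one polynomials in the $P_j$ as required. Taking the predicted family for this action to be a genuine PVM, the events $\{S_C(\Pi_i)\}_i$ partition $\Omega$ and carry the correct marginals $\mathbb P(S_C(\Pi_i))=\Tr[\Pi_i\rho^\star_E]$; in particular $\sum_i \mathbbm{1}[\omega\in S_C(\Pi_i)]=1$ at every $\omega$, so the \emph{sum rule} holds pointwise. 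This step is routine. Crucially, however, equality of marginals on the single state $\rho^\star_E$ does \emph{not} by itself force the event attached to a shared projector $P$ to be the same in two different contexts, so global commutativity alone leaves the assignment potentially contextual.

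The decisive step is to restore context-independence using \emph{sequential} measurements and projector repeatability, and this is where I expect the real work to lie. Fix a projector $P$ lying in two contexts $C,C'$ and consider the length-two action that measures $C$ then $C'$. Quantum mechanically, conditioned on the first outcome being $P$, the state collapses into $\mathrm{range}(P)$ and the subsequent measurement of the context $C'\ni P$ returns $P$ with certainty; hence the joint Born distribution assigns probability $0$ to the event ``$P$ in $C$, not-$P$ in $C'$''. Since the predictor must reproduce this sequence statistic and all events live in the common space $(\Omega,\mathbb P)$, the events ``$P$ clicks in $C$'' and ``$P$ clicks in $C'$'' must coincide up to a $\mathbb P$-null set. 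As the KS set is finite there are only finitely many contexts and finitely many shared-projector pairs; intersecting the corresponding full-measure sets yields a single \emph{good set} $G\subseteq\Omega$ with $\mathbb P(G)=1$ on which, for every KS projector $P$, the event ``$P$ clicks'' is genuinely context-independent, and I would set $v_\omega(P):=\mathbbm{1}[\omega\in S(P)]$ for $\omega\in G$.

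Finally, since $\mathbb P(G)=1$ and $\mathbb P$ is a probability measure, $G$ is nonempty; choosing any $\omega\in G$, the assignment $v_\omega:\{P_1,\dots,P_n\}\to\{0,1\}$ is context-independent and, by the pointwise sum rule of the second step, satisfies $\sum_i v_\omega(\Pi_i)=1$ for every decomposition $\sum_i\Pi_i=\mathbb I$ built from the set. This is exactly a Kochen–Specker colouring, contradicting uncolourability, so the corollary follows. The two obstacles I anticipate are (i) justifying the reduction to a PVM per context and discharging the attendant measure-zero bookkeeping—overlaps and gaps carry $\mathbb P$-measure zero precisely because the predicted probabilities sum to one on $\rho^\star_E$—and (ii) the context-independence step itself, which, as noted, cannot be extracted from global commutativity alone but genuinely requires the collapse dynamics of the sequential instrument to be faithfully mirrored by the predictor.
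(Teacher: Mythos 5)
Your argument is a valid reductio of the same overall shape as the paper's, but it takes a genuinely different and more demanding route. The paper's proof fixes a \emph{single} action string $a_{1:m}^{\dagger}$ that measures every projector of the KS set at the final cycle, reads each $X_j=v_Q(P_j)$ directly off the outcome sequence of the corresponding commuting family, declares the assignment non-contextual ``by construction'' (each $P_j$ receives one indicator, defined once), and asserts that $\sum_{j\in\mathcal C}v_Q(P_j)=1$ holds almost surely because the Born probabilities are reproduced. You instead diagonalise the whole commuting family into an abelian algebra with joint spectrum $(\Omega,\mathbb P)$, observe correctly that matching marginals on the \emph{single} state $\rho^\star_E$ does not by itself force the event attached to a shared projector $P$ to coincide across contexts, and then close that gap with a sequential-measurement/repeatability argument: the joint Born statistics of ``measure $C$ then $C'$'' assign probability zero to the discrepancy event, so the two events agree up to a $\mathbb P$-null set, and a finite intersection of full-measure sets yields a point $\omega$ carrying a global colouring. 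What your route buys is an explicit treatment of the context-independence step that the paper compresses into a single sentence; what it costs is an extra consistency assumption you should make explicit, namely that the predictor's two-step projectors marginalise to its one-step projectors as \emph{events} in $\Omega$ and not merely in $\mathbb P$-measure (the hypothesis only constrains traces against $\rho^\star_E$, so the identification of $\bigcup_{e_2}S_{a_1a_2}(e_1,e_2)$ with $S_{a_1}(e_1)$ is itself only up to null sets and needs the same bookkeeping you flag in obstacle (i)). Since all of your exceptional sets are null and the KS set is finite, this is repairable, so the proposal stands as a more elaborate but essentially sound alternative to the paper's shorter argument.
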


\begin{proof}
Fix one action string $a_{1:m}^\dagger$ that instructs the agent
to measure, at the final cycle, \emph{every} projector in the KS-set.
Because the $Q$TM outputs commuting projectors
$\bigl\{Q_{a_{1:m}^\dagger}(e_{1:m})\bigr\}$, for each $j$ there is a
classical random variable
\(
   X_j := v_Q(P_j)
   := e_{1:m}(P_j)\in\{0,1\},
\)
namely the indicator that the outcome sequence has eigenvalue $1$ for
$P_j$.  
Equation~\eqref{eq:perfect_prediction} asserts that these random
variables reproduce the Born probabilities of the true state, hence the
map
\(
   v_Q:\;P_j\longmapsto X_j(\omega)
\)
is \emph{non-contextual}: it assigns $0$ or $1$ to $P_j$ without
reference to the context in which $P_j$ is measured. By construction
$\sum_{j\,\in\,\mathcal C}P_j=\mathbb{I}$ for every context
$\mathcal C\subset\{1,\dots,n\}$, whence
\(
   \sum_{j\,\in\,\mathcal C} v_Q(P_j) = 1
\)
holds almost surely.
Thus $v_Q$ is a \emph{global}
transformation of the projector into $\{0,1\}$,
contradicting KS uncolourability.
Therefore a perfect, non-disturbing predictor $Q$ cannot exist.
\end{proof}
\vspace{-0.5em}
Contextuality problematises the straightforward analogy with classical history in AIXI albeit subject to the extent to which environments actually do manifest uncolourable equivalents. Because a universal classical history $h_{<t}$ cannot determine
simultaneously the outcomes of all future instruments, any Bayesian
update rule that conditions the QSI posterior on $h_{<t}$ \emph{alone}
is necessarily incomplete; the posterior must be refined by the
entire \textit{future} instrument schedule.  Consequently the
martingale proof of Theorem~\ref{thm:qsi-convergence} requires an
adapted filtration that records measurement contexts.
\vspace{-1em}

\subsubsection{No-cloning.}  
      Evaluating the quantum likelihood
      $\Pr_{Q}(e_{1:m}\Vert a_{1:m})$ requires \emph{fresh} copies of
      the pre-measurement state $\rho_{E}^{Q}(a_{1:m})$, but the
      no-cloning theorem forbids their duplication from a single run.
      Hence each Bayesian update consumes the very evidence it needs
      for validation; the sample complexity of learning scales with
      the number of distinct instruments explored, whereas in the
      classical case the same trajectory can be replayed arbitrarily
      often. Two consequences of the above are that: (a) convergence proofs must be context-sensitive; a single
      universal mixture cannot assign fixed probabilities to all
      quantum experiments simultaneously; and (b)  even under ideal identifiability the learning rate is limited by state-preparation resources giving rise to sample complexity consequences for any QAIXI agent. See Appendix \ref{app:no-cloning-sample-complexity}.
      \vspace{-1em}

\section{Conclusion and Open Questions}
\vspace{-0.5em}
We have laid out an analysis of and quantum information processing-based mathematical framework for Quantum AIXI, a universal intelligent agent designed to operate within a quantum mechanical universe. We have shown, using our channel and register-based model of agent/environment interaction in the quantum setting, how quantum analogues of universal intelligence components can be constructed. However there are significant limitations to QAIXI. Firstly, it is not practical. QEC requires millions of physical qubits for even modest logical qubit counts. Maintaining extended coherence for QAIXI decision cycles remains far beyond existing technology. Key challenges include the incomputability of quantum Kolmogorov complexity, the resource overhead required for quantum state preparation, simulation and calculating probabilities. Moreover, exactly when QAIXI would provide a quantum advantage is an open question (see App. \ref{app:quantum-advantage}).
As interest in both computational agency and quantum information processing grows, understanding the capabilities and constraints of integrating quantum technologies into AGI systems will become of increasing focus. Further research directions include:
\begin{enumerate}
    \item \textit{Computable Approximations}: Can meaningful, computable (or efficiently quantum-computable) approximations to QAIXI be developed e.g. via classical shadows \cite{huang2024learning,huang2020predicting,huang2022quantum}? What restrictions on the class of quantum environments $\Mcal_{Q,sol}$ or the definition of $K_Q$ would make this possible?
    \item \textit{Convergence Rates and Bounds}: Rigorously establishing convergence theorems for QSI and deriving tight bounds on convergence rates, taking into account quantum information-theoretic limits (e.g., Holevo information, quantum data processing inequalities), in light of learnability and tomography \cite{aaronson2007learnability} e.g. How does quantum measurement back-action affect learning speed?
    \item \textit{Quantum Resources}: How do quantum resources like entanglement, if available to the agent for its internal processing or for interacting with the environment, affect the performance or complexity of QAIXI? How feasible is coherent learning \cite{lupu2024qubits,pang2024information} or coherent histories for QAIXI? 
    \item \textit{Impact of Quantum Interpretations}: While this paper focused on the standard (Copenhagen-like) measurement formalism, how would adopting alternative interpretations (e.g., Everettian Many-Worlds, Bohmian Mechanics \cite{Bohm1952}, QBism \cite{fuchs2014introduction}) alter the definition of QAIXI's optimality, its policy, or its perceived complexity? For example, in an Everettian setting, does QAIXI optimise expected reward across trajectories or branches \cite{Deutsch1999,Wallace2012}?
\end{enumerate}

%
%
\bibliographystyle{splncs04}
\bibliography{refs-quant-ontology,refs-new,refs-agi,bibliograph1,refs-agi-classical} 

\newpage
\appendix

\section{Entanglement in the Interaction Loop}
\label{app:entangled-loop}
In \S\ref{sec:qagi-models}, the post-measurement joint state was written under the
separability assumption $\rho_{AE}^{(t-1)}=\rho_{A}^{(t-1)}\!\otimes\rho_{E}^{(t-1)}$,
leading to
\(
   \rho_{AE}^{(t)}=\rho_{A}^{(t-1)}\!\otimes\rho_{E'}^{(t)}
\).
When the agent and environment registers are initially entangled
this factorisation is instead as follows. The outcome probability is given by:
\begin{align}
    \Pr(o_t=k\,|\,a_t,\rho_{AE}^{(t-1)})
          \;=\;
          \operatorname{Tr}\!\Bigl[
             \bigl(\operatorname{id}_{A}\!\otimes\!\mathcal E^{a_t}_{k}\bigr)
             \rho_{AE}^{(t-1)}
          \Bigr] \label{eq:app:entanglement}
\end{align}
with conditional post-measurement state:
\begin{align*}
    \rho_{AE}^{(t)}
          \;=\;
          \frac{
            (\operatorname{id}_{A}\!\otimes\!\mathcal E^{a_t}_{k})
            \bigl(\rho_{AE}^{(t-1)}\bigr)
          }{\Pr(k)}
\end{align*}
and reduced environmental state:
\begin{align*}
    \rho_{E'}^{(t)}
            =\operatorname{Tr}_{A}\rho_{AE}^{(t)}
            =\frac{\mathcal E^{a_t}_{k}\!\bigl(
                       \operatorname{Tr}_{A}\rho_{AE}^{(t-1)}
                     \bigr)}
                   {\Pr(k)}.
\end{align*}
Only when $\rho_{AE}^{(t-1)}$ is separable (or when the instrument
completely decoheres the measured subsystem) does the joint state
factorise into $\rho_{A}^{(t-1)}\!\otimes\rho_{E'}^{(t)}$. Note the superoperator $\operatorname{id}_{A}\!\otimes\!\mathcal E^{a_t}_{k}$ acts on $\rho_{AE}$ regardless of whether a product state or not.  The subsequent internal QAIXI update
$\mathsf U_{\!\mathrm{int}}:\rho_{A}^{(t-1)}\!\mapsto\rho_{A}^{(t)}$
now acts on 
\(
  \rho_{A}^{(t)}=\operatorname{Tr}_{E}\rho_{AE}^{(t)},
\)
which already incorporates any entanglement-induced disturbance. This general form of interaction loop is valid for all
pre-measurement states, entangled or not, and serves as the reference implementation for any rigorous analysis of QAIXI in the fully quantum
regime. Note that in the main text we have adopted the separability assumption in (\S\ref{sec:qagi-models}) i.e. assuming a product–state
form~$\rho_A^{(t-1)}\!\otimes\rho_E^{(t-1)}$ for derivations of the quantum Bellman equation and other discussion. However, we have only qualitatively sketched some of the consequences of QAIXI-environment entanglement in the remainder of the paper due to space constraints. A more detailed technical discussion of the same is the topic of forthcoming work.

\section{Quantum Advantage for QAIXI}
\label{app:quantum-advantage}
A fundamental question for any quantum extension of classical AIXI is: under what circumstances does quantum processing provide genuine computational advantages? We examine this question below in the case of specific environment classes where classical simulation becomes intractable.

\subsection{Boson Sampling Environments}
A common reference for assessments of quantum computational advantage over classical protocols is that of boson sampling \cite{AaronsonArkhipov2011}, a non-universal photonic quantum protocol that injects indistinguishable single photons into a fixed linear interferometer and samples output patterns whose probabilities are governed by matrix permanents which is considered exponentially hard for classical computers. In principle boson-sampling would, as with `non-agentic' quantum systems, propose a candidate test of quantum advantage (albeit with limited practical scope). To see how it would work, let $\mathcal{B}_n$ denote the class of quantum environments that embed boson sampling processes as follows. Each environment $Q \in \mathcal{B}_n$ maintains an $n$-mode optical interferometer described by a random unitary matrix $U \in \mathrm{U}(n)$ drawn from the Haar measure. On each cycle, $Q$ accepts a classical action $a_t$ specifying input photon configuration. It then evolves the state via $U$, and outputs measurement outcomes from a fixed detection scheme. Suppose there exists a family $\{Q_n\} \subset \mathcal{B}_n$ such that (a) generating one percept from $Q_n$ requires $\mathrm{poly}(n)$ quantum operations; and (b) computing the output distribution to total variation distance $< 2^{-n}$ is $\#\mathrm{P}$-hard. In such a case, any classical AIXI with total running time fails to approximate the optimal value unless the polynomial hierarchy collapses.
\\
\\
This can be understood by considering how AIXI's value function requires accurate estimation of environment transition probabilities. For $Q_n \in \mathcal{B}_n$, this reduces to sampling from the boson sampling distribution. The original boson sampling conjecture involves classical simulation which, to be within the required accuracy, would yield an algorithm for computing permanents of Gaussian random matrices—a problem conjectured to be $\#\mathrm{P}$-complete which would collapse the hierarchy, contradicting standard complexity-theoretic assumptions. In terms of value functions, let $\hat{V}^{\mathrm{classical}}$ denote the value function computed by any polynomial-time classical agent. If $|\hat{V}^{\mathrm{classical}} - V^*_{Q_n}| < 1/\mathrm{poly}(n)$, a QAIXI agent would have to simulate the boson-sampling output well enough to capture its fine-grained probabilities. Aaronson et al. showed that any classical routine able to do that would also let us efficiently approximate the permanent of a random matrix—something believed so hard that it would topple the entire polynomial-hierarchy ladder of complexity classes \cite{AaronsonArkhipov2011}. In this way, boson sampling provides an in principle litmus test for quantum advantage, albeit with limited applicability to specific tasks AIXI or QAIXI would undertake. The boson sampling example illustrates how QAIXI's computational advantage over classical AIXI is contingent on environments involving problems which are classically hard (see discussion in \cite{AaronsonArkhipov2011}).

\subsection{Structured Environment Classes}

The example above raises the obvious question of when do quantum environments admit efficient classical approximation? We draw upon results in quantum information to illustrate when this may be true in the case of weakly entangled systems. In quantum formalism, an environment $Q$ is called a \textit{Matrix Product Environment} (MPE) with bond dimension $\chi$ if its state evolution can be represented as:
\begin{equation}
\rho_E^{(t)} = \mathrm{Tr}_{\mathrm{aux}}\left[\bigotimes_{i=1}^{n} A_i^{[t]}(\chi)\right]
\end{equation}
where each $A_i^{[t]}(\chi)$ is a $\chi \times \chi$ matrix depending on the history up to time $t$ (see also \cite{vidal2003efficient}). The question then becomes how classically learnable is an MPE? Let $\mathcal{E}_{\chi}$ denote the class of all MPEs with bond dimension $\chi = O(1)$. Then there exists a classical polynomial-time agent that $\varepsilon$-approximates QAIXI's value function on any $Q \in \mathcal{E}_{\chi}$ with sample complexity $O(\mathrm{poly}(n, \varepsilon^{-1}))$ assuming local measurements . The bond dimension of an MPE is the size of tensors used to represent the MPE and is a figure that relates to the complexity of the representation and the extent of its entanglement. The key observation is that MPEs with constant bond dimension admit efficient tomography using classical shadow tomography \cite{aaronson2018shadow,huang2020predicting} which can, in certain cases, reduce the observables required to calculate the expectation values of exponentially many observables of an unknown quantum state to only logarithmic number of randomised measurements of that state. We now turn to general information-theoretic limits.

\subsection{Quantum learning and classical shadows}
The focus of our examination of QAIXI above is upon a quantum agent which may learn to act optimally in a universe that is itself quantum mechanical. However, QAIXI depends upon an uncomputable mixture $\Xi_Q$ while the sample-complexity consequences of estimating $\Xi_Q$ by full state or process tomography incurs an exponential overhead that would undermine any potential quantum advantage. One prospective solution lies in exploring the role that novel techniques in classical shadow tomography \cite{huang2020predicting,huang2022quantum,huang2024learning} may offer . Shadow-based protocols exploit randomised Clifford measurements followed by a classical post-processing map $\mathsf{S}$ that builds a shadow $\hat\rho$ of the unknown state from a compressed set of single-copy outcomes. We explore the adaption of QAIXI using classical shadows in ongoing work.


\section{QSI Detail and Limitations}
\label{app:qsi-convergence-elaboration}
Here we discuss conditions on QSI (Theorem~\ref{thm:qsi-convergence}) in particular some of the challenges facing proving convergence.

\subsection{Convergence Conditions}
Prior to measurement, each state is given a weight proportional to its description length. Recall from Theorem~\ref{thm:qsi-convergence} that under assumptions \emph{(C1)}–\emph{(C3)}, the posterior semi-density operator $\Xi_Q^{(t)}$ (Eq.~\eqref{eq:posteriorupdate}) satisfies Eq. \ref{eqn:QSI:posteriordensity}:
\begin{align}
    \mathbb E_{Q^{\star}}\!\bigl[
     D\!\bigl(
        \rho_{E}^{\star}(a_{1:t})
        \,\big\Vert\,
        \Xi_{Q}^{(t)}(a_{1:t})
     \bigr)
  \bigr] = \Tr\Bigl[
      \rho_{E}^{\star}(a_{1:t})
      \Bigl(
         \ln \rho_{E}^{\star}(a_{1:t})
         -
         \ln \Xi_{Q}^{(t)}(a_{1:t})
      \Bigr)
  \Bigr]
  \;\le\;
  \frac{K_{Q}(Q^{\star})\ln 2+\ln(1+g)}{\!\!t}.
\end{align}
To show this, recall that each density operator $\rho_E^Q$ is associated with a quantum state $Q$ where $Q^*$ denotes the true environmental state. $\Xi_{Q}^{(t)}$ describes the agent's best assessment as to the true state of the environment $\rho_E^*$ at time $t$ (we leave the time dependence of $\rho$ understood).  The QAIXI belief state is a weighting using such descriptions over each possible state density operator:
     \begin{align*}
         \Xi_Q = \omega_{Q^*} \rho_E^* + \sum_{Q \neq Q^*} \omega_Q \rho_E^Q.
     \end{align*}  
The $\omega_Q$ terms reflect the descriptions of state $Q$. The shortest description of $Q$ is given by $K_Q(Q)$ where simpler descriptions get more weight via $\omega_Q = 2^{-K_Q(Q)}$. In classical minimum description length (MDL) formalism, the code length of a data sequence is a combination of the model complexity (description length) and likelihood of data under the chosen model (a dilution term). 
The $D\!\bigl(
        \rho_{E}^{\star}(a_{1:t})
        \,\big\Vert\,
        \Xi_{Q}^{(t)}(a_{1:t})
     \bigr)$ term denotes the relative entropy and is reflective of the additional measurements required to minimise the estimate between $\Xi$ and $\rho_E^*$. To see this, define:
     \begin{align}
         Z = \omega_{Q^*} + \sum_{Q \neq Q^*} \omega_Q = \omega_{Q^*}(1+g) \label{app:QSI:Z}
     \end{align}
     where:
     \begin{align}
         g = \sum_{Q\neq Q^*} \frac{\omega_Q}{\omega_{Q^*}} = \sum_{Q \neq Q^*} 2^{-(K_Q(Q) - K_Q(Q^*))} \label{app:QSI:g}
     \end{align}
     $\Xi_Q$ is not yet normalised and $\rho_E$ remains a semi-density operator as  $\Tr \rho_E \leq 1$ i.e. as with the classical case $Z$ is sub-normalised e.g. $Z = \sum_Q \omega_Q \leq 1$ for prefix-free codes with binary word-lengths (see \cite{grunwald2008algorithmic}). To apply the martingale and
   chain-rule identities we require density (not semi-density) operators. To do so, we note the correct model's share of the total description mass (the effective prior mass of the true environment inside the normalised
mixture) is:
     \begin{align*}
         \frac{\omega_{Q^*}}{Z} &= \frac{1}{1+g} = c_{max} \\
         Z &= \omega_{Q^*}(1 + g)
     \end{align*}
      The normalised prior at time $t=0$ is then given by:
     \begin{align}
         \Xi_Q = \frac{\Xi_Q^{(0)}}{Z} = \frac{\rho^*_E}{(1+g)}
     \end{align} 
    The form of the bound $D_0$ depends on our choice of weighting $c$ for $\rho_E^*$ where $0 < c \leq c_{max}$. Choosing $c=c_{max}$ gives:
    \begin{align}
        \Xi_Q^{(0)} = c\rho_E^* = \frac{1}{1+g} \rho_E^*
    \end{align}
     with relative entropy given by:
\begin{align*}
         D_0 &= \Tr[\rho_E(\ln \rho^*_E - \ln \Xi_Q)]\\
         &\leq \Tr[\rho_E(\ln \rho^*_E - (\ln \rho_E^* - \ln(1+g)))]\\
         &= \ln(1+g)
     \end{align*}
      assuming $\Tr(\rho_E^Q)=1$ for all states $Q$. This form includes only the dilution term. In the event of perfect information (so $g=0$) it reduces to the complexity $K_Q(Q^*) \ln 2$. To assume the more familiar MDL form (which includes both complexity and dilution terms) we can scale $c$ (while remaining within the bounds above) e.g. $c=\omega_{Q^*}^2/Z = \omega_{Q^*}/(1+g)$. Doing so gives:
      \begin{align}
         D_0 &= \Tr[\rho_E(\ln \rho^*_E - \ln \Xi_Q)]\\
         &\leq \Tr[\rho_E(\ln \rho^*_E - (\ln \omega_{Q^*}  - \ln(1+g) + \ln \rho_E^*))]\\
         &= \underbrace{K_Q(Q^*) \ln 2}_{\text{model complexity}} + \underbrace{\ln(1+g)}_{\text{dilution}} \label{eq:1}
     \end{align}
At each cycle $s$ (action and observation), the agent’s
state update is a CPTP map
$\mathcal M_{s}:\rho\mapsto\rho(a_{1:s})$ such that: 
\begin{align}
    \Xi_{Q}^{(s)}
   \;=\;
   \frac{\mathcal M_{s}\!\bigl(\Xi_{Q}^{(s-1)}\bigr)}
        {\operatorname{Tr}\!\mathcal M_{s}\!\bigl(\Xi_{Q}^{(s-1)}\bigr)}
   \quad\text{and}\quad
\rho^{\star}_{E}(a_{1:s})
   \;=\;
   \frac{\mathcal M_{s}\!\bigl(\rho^{\star}_{E}(a_{1:s-1})\bigr)}
        {\operatorname{Tr}\!\mathcal M_{s}\!\bigl(\rho^{\star}_{E}(a_{1:s-1})\bigr)} .
\end{align}
Using the chain rule for quantum relative entropy
\cite{fang2020chain} we obtain:
\begin{equation}
D_{s-1}
   =
   \operatorname*{\mathbb E}_{Q^{\star}}
   \bigl[
      D\bigl(\rho^{\star}_{E}(a_{1:s-1})\Vert\Xi_{Q}^{(s-1)}\bigr)
   \bigr]
   =
   \operatorname*{\mathbb E}_{Q^{\star}}
   \sum_{k=1}^{s-1}
   D\!\bigl(
        \rho^{\star}_{E}(a_{k}\!\mid a_{1:k-1})
        \,\big\Vert\,
        \Xi_{Q}^{(k)}(a_{k}\!\mid a_{1:k-1})
   \bigr).
\label{eq:2}
\end{equation}
Applying \eqref{eq:2} with $s=t+1$ and noting that each conditional divergence
is non-negative gives:
\begin{equation}
\operatorname*{\mathbb E}_{Q^{\star}}
   D\!\bigl(
        \rho^{\star}_{E}(a_{1:t})
        \,\Vert\,
        \Xi_{Q}^{(t)}(a_{1:t})
      \bigr)
   \;\le\;
   D_{0}.
\tag{3}\label{eq:3}
\end{equation}
Combining \ref{eq:1} and \eqref{eq:3} and dividing by $t$:
\begin{align}
    \mathbb E_{Q^{\star}}
   D\!\bigl(
        \rho^{\star}_{E}(a_{1:t})
        \,\Vert\,
        \Xi_{Q}^{(t)}(a_{1:t})
      \bigr)
   \;\le\;
   \frac{K_{Q}(Q^{\star})\ln 2+\ln(1+g)}{t}
\end{align}
to obtain Eq. \ref{eqn:QSI:posteriordensity}. In our sketch above, the transition from the relative entropy convergence to a convergence rate for distinguishability is achieved using the quantum Pinsker inequality \cite{watrous_theory_2018}, quantifying how the QAIXI agent's belief $\Xi_{Q}^{(t)}(a_{1:t})$ converges to the true state $\rho_{E}^{\star}(a_{1:t})$. In an information-theoretic sense, $\frac{1}{2}\Vert \rho - \sigma \Vert_1$ represents the maximum probability with which one can distinguish between two quantum states $\rho$ and $\sigma$ using an optimal measurement. The quantum Pinsker inequality (also known as the Csiszár–Kullback–Pinsker inequality for quantum states) provides that for any two density operators $\rho$ and $\sigma$:
\[
\tfrac12\Vert\rho-\sigma\Vert_{1}
  \;\;\le\;\;
  \sqrt{\tfrac12\,D(\rho\Vert\sigma)}.
\]
We use the property that for non-negative random variables, if $A \le B$, then $\mathbb{E}[A] \le \mathbb{E}[B]$ and Jensen's inequality such that $\mathbb{E}[\sqrt{X}] \le \sqrt{\mathbb{E}[X]}$ such that:
\begin{equation}
  \mathbb E_{Q^{\star}}\!
  \bigl[\tfrac12\lVert\rho-\sigma\rVert_{1}\bigr]
  \;\le\;
  \sqrt{\tfrac12\;
        \mathbb E_{Q^{\star}}\!\bigl[D(\rho\Vert\sigma)\bigr]}.
\label{eq:pinsker_expected_value_form}
\end{equation}
Substituting into equation  \ref{eq:pinsker_expected_value_form} with $\rho=\rho_{E}^{\star}(a_{1:t})$ and $\sigma=\Xi_{Q}^{(t)}(a_{1:t})$:
\begin{align*}
\mathbb E_{Q^{\star}}\!
  \Bigl[
     \tfrac12
     \bigl\lVert
       \rho_{E}^{\star}(a_{1:t})-\Xi_{Q}^{(t)}(a_{1:t})
     \bigr\rVert_{1}
  \Bigr]
  &\;\le\;
  \sqrt{\tfrac12\; \mathbb E_{Q^{\star}}\!\left[D\!\bigl(\rho_{E}^{\star}(a_{1:t})\,\big\Vert\,\Xi_{Q}^{(t)}(a_{1:t})\bigr)\right]} \\
  &\;\le\;
  \sqrt{\tfrac12 \cdot \frac{K_{Q}(Q^{\star})\ln 2+\ln(1+g)}{t}} \\
  &\;=\;
  \sqrt{\frac{K_{Q}(Q^{\star})\ln 2+\ln(1+g)}{2t}} \\
  &\;=\;
  \left(\sqrt{\frac{K_{Q}(Q^{\star})\ln 2+\ln(1+g)}{2}}\right) t^{-1/2}.
\end{align*}
Hence :
\begin{align}
  \mathbb E_{Q^{\star}}\!\bigl[
     \tfrac12
     \bigl\lVert
       \rho_{E}^{\star}(a_{1:t})-\Xi_{Q}^{(t)}(a_{1:t})
     \bigr\rVert_{1}
  \bigr]
  =\mathcal O\!\bigl(t^{-1/2}\bigr). \label{eq:trace_distance_convergence_derived}
\end{align}
The significance of Eq.~\eqref{eq:trace_distance_convergence_derived} is that the $\mathcal{O}(t^{-1})$ convergence rate for the expected relative entropy implies an $\mathcal{O}(t^{-1/2})$ convergence rate for the expected trace distance. Thus any measurement strategy can
separate the agent’s belief from the true state only with probability
vanishing like \(t^{-1/2}\). This means that as the agent accumulates more data (as $t$ increases), its belief state $\Xi_Q^{(t)}(a_{1:t})$ not only becomes a better approximation of the true state $\rho_E^\star(a_{1:t})$ in terms of information-theoretic divergence, but also becomes increasingly \emph{indistinguishable} from the true state via any physical measurement. The maximum probability of successfully distinguishing the agent's belief from the true state diminishes as $t^{-1/2}$. This transition from a $1/t$ rate for loss (relative entropy can be seen as a generalised squared error in the space of probability distributions or density matrices) to a $t^{-1/2}$ rate for an $L_1$-like distance (trace distance is the quantum analogue of total variation distance) is a common theme in statistical inference. The $\mathcal{O}(t^{-1/2})$ rate for distinguishability is derived without needing any assumptions beyond those already made for establishing the relative entropy bound (i.e., conditions (C1)–(C3) that underpin Theorem~\ref{thm:qsi-convergence}).
\subsection{Conditions on QSI}
Establishing the convergence of QSI is challenging because of the conditions that would need to subsist:
\begin{enumerate}
    \item \emph{(C1) Ergodicity:} There is a $\delta>0$ such that for every admissible action policy the time-averaged (with respect to each admissible instrument schedule) state of the true environment $Q^\star$ satisfies
    \begin{align*}
        \liminf_{m\to\infty}
        \tfrac1m\sum_{k=1}^{m}
          \bigl\lVert\rho_{E}^{\star}(a_{1:k})-
                         \rho_{E}^{\star}(a_{1:k-1})\bigr\rVert_{1}
      \leq\delta.
    \end{align*}
    This condition implies that the environment's state does not change too erratically on average, ensuring that past observations retain some relevance for predicting future states.
    \item \emph{(C2) Informational Completeness:} Each cycle’s instrument (measurement) has a POVM refinement whose classical Fisher information matrix is $\epsilon$-non-singular (full-rank) up to error $\epsilon>0$. This means the agent's measurements are sufficiently informative (due to the lack of degeneracy) to distinguish different quantum states, which is crucial for learning.
    \item \emph{(C3) Complexity Gap:} The true environment $Q^\star$ has finite quantum Kolmogorov complexity, $K_{Q}(Q^{\star})<\infty$, and the sum of relative complexities of other environments:
    \begin{align*}
        g:=\sum\nolimits_{Q\neq Q^{\star}} 2^{-\bigl(K_{Q}(Q)-K_{Q}(Q^{\star})\bigr)}
    \end{align*}
    is finite. This ensures $Q^\star$ is not infinitely complex relative to the universal QTM and has a non-vanishing initial weight in the QSI mixture, and that the collective weight of alternative hypotheses is manageable.
\end{enumerate}

\subsection{Comparison with classical SI}
In our sketch above, we have adapted the typical classical approach for convergence of classical SI.
\\
\\
\noindent\textit{1. Likelihood Operators and Initial Divergence Bound:}
The QSI mixture $\Xi_{Q}(a_{1:m})$ (Eq.~\eqref{eq:XiQ-density}) is a semi-density operator representing the agent's belief about the environment state before the $m$-th observation, given actions $a_{1:m}$. The posterior $\Xi_{Q}^{(t)}$ is obtained via updates based on observed outcomes (Eq.~\eqref{eq:posteriorupdate}).
A first step, analogous to classical Solomonoff induction, is to bound the initial information deficit of the QSI mixture with respect to the true environment $Q^\star$. This initial divergence, $D_0 = D(\rho_E^\star \| \Xi_Q^{(0)})$, where $\Xi_Q^{(0)}$ is the initial QSI prior, can be bounded by the quantum Kolmogorov complexity of the true environment:
\begin{equation} \label{eq:app-initial-divergence-bound}
    D_0 \le K_{Q}(Q^{\star})\ln 2+\ln(1+g).
\end{equation}
This bound reflects that the cost of encoding the true environment within the universal mixture is related to its complexity $K_Q(Q^\star)$, and the term $\ln(1+g)$ accounts for the mass of alternative hypotheses. Establishing this bound rigorously for semi-density operators and Umegaki relative entropy is central to a full proof.
\\
\\
\noindent\textit{2. Monotonicity of Quantum Relative Entropy and Martingale Argument:}
The core of the convergence argument hinges on the data-processing inequality for quantum relative entropy. This principle states that for any quantum operation (a completely-positive trace-preserving map, CPTP map), such as a measurement branch $\mathcal{M}_{e_k}$ occurring in the Bayesian update, the relative entropy between any two states cannot increase: $D(\mathcal{M}(\rho) \| \mathcal{M}(\sigma)) \le D(\rho \| \sigma)$.
Recall the likelihood operators $\Lambda_{t}^{Q} :=\mathcal M_{e_{t}}\circ\dots\circ\mathcal M_{e_{1}} (\rho_{E}^{Q}(a_{1:t}))$ where $\rho_E^Q(a_{1:t})$ is interpreted as the initial state on which the sequence of CPTP maps $\mathcal{M}_{e_1}, \dots, \mathcal{M}_{e_t}$ act.)
In the main paper we assume that $D(\rho_{E}^{\star}\,\Vert\,\Xi_Q) - D(\Lambda^{\star}_{k}\,\Vert\,\Lambda^{\Xi}_{k}) \;\;\text{is }\mathbb E_{Q^{\star}}\!\text{-martingale}$. Specifically, if $D_k = D(\rho_E^{\star(k)} \| \Xi_Q^{(k)})$ is the divergence between the true state and the posterior after $k$ observation-update cycles, then $\mathbb{E}_{Q^\star}[D_k | \text{history } h_{<k}] \le D_{k-1}$. The divergence, on average, tends to decrease or stay the same.
The decline in divergence at step $k$, $\Delta_k = D_{k-1} - \mathbb{E}_{e_k \sim Q^\star}[D_k | h_{<k}]$, is therefore non-negative. As a result, the assumption $C2$ is needed to ensure that if the posterior $\Xi_Q^{(k-1)}$ is not perfectly aligned with the true state $\rho_E^{\star(k-1)}$. Measurement provides additional information that leads to a decrease in uncertainty (and greater weighting of the QAIXI belief $\Xi$ towards the true state ($g$ declines while $\omega_{Q^*}$ increases).
\\
\\
\noindent\textit{3. Average Divergence:}
By summing these non-negative expected drops over $t$ interaction cycles, we get:
\[ \sum_{k=1}^t \mathbb{E}_{Q^\star, \text{hist}}[\Delta_k] = \mathbb{E}_{Q^\star, \text{hist}}[D_0 - D_t] \le D_0, \]
where $D_0$ is the initial divergence bounded as in Eq.~\eqref{eq:app-initial-divergence-bound}, and $D_t = D(\rho_E^\star(a_{1:t}) \| \Xi_Q^{(t)}(a_{1:t}))$ is the divergence at step $t$.
This implies that the sum of positive one-step learning gains (in terms of divergence reduction) is bounded by the initial total ignorance $D_0$. The result that:
\[ \mathbb{E}_{Q^{\star}}[D_t] \le \frac{D_0}{t} = \frac{K_{Q}(Q^{\star})\ln 2+\ln(1+g)}{t}. \]
shows that the expected divergence at time $t$ itself must decrease on average
This establishes that the expected relative entropy between the true environment state and the QSI posterior converges to zero at a rate of $1/t$.
\\
\\
\noindent\textit{4. Quantum Pinsker Inequality for Trace Distance Convergence:}
Finally, the quantum Pinsker inequality provides a link between the trace distance $\|\rho - \sigma\|_1$ (a measure of distinguishability) and the relative entropy $D(\rho \| \sigma)$:
\[ \frac{1}{2}\|\rho - \sigma\|_1^2 \le D(\rho \| \sigma). \]
Applying this to the result from step 3:
\[ \mathbb{E}_{Q^{\star}}\left[\frac{1}{2} \bigl\lVert \rho_{E}^{\star}(a_{1:t})-\Xi_{Q}^{(t)}(a_{1:t}) \bigr\rVert_{1}^2\right] \le \mathbb{E}_{Q^{\star}}[D_t] \le \frac{K_{Q}(Q^{\star})\ln 2+\ln(1+g)}{t}. \]
Using Jensen's inequality (or a linear version of Pinsker's inequality), this implies convergence in trace distance:
\[ \mathbb{E}_{Q^{\star}}\left[\frac{1}{2} \bigl\lVert \rho_{E}^{\star}(a_{1:t})-\Xi_{Q}^{(t)}(a_{1:t}) \bigr\rVert_{1}\right] = \mathcal{O}(t^{-1/2}). \]
This shows that the QSI posterior state $\Xi_Q^{(t)}(a_{1:t})$ converges to the true environment state $\rho_E^\star(a_{1:t})$ in trace distance. The effect of $\epsilon$-completeness in (C2), rather than perfect informational completeness, would typically introduce an additional error term in the bounds, which vanishes as $\epsilon \to 0$.

\subsection{Limitations}
The proof outline above, while following a known pattern, relies on several steps that require careful and rigorous justification in the quantum setting:

\begin{enumerate}
    \item \textit{Martingale Theory for Quantum Processes}. Classical martingale convergence theorems are well-established for sequences of real-valued random variables with respect to a filtration. Extending these to sequences of quantum states (density operators or, as here, semi-density operators like $\Xi_Q$) that evolve via quantum operations (measurements and unitary evolutions) is a significant technical hurdle. The notion of a filtration $\mathcal{F}_t$ must properly capture the information gained from sequences of quantum measurements, which is complicated by measurement back-action, the non-commutative nature of observables, and potential contextuality.
    \item \textit{Relative Entropy with Semi-Density Operators}. The Umegaki relative entropy $D(\rho\|\sigma)$ is typically defined for density operators $\rho$ and $\sigma$ (where $\Tr\rho = \Tr\sigma = 1$). We have set out how this would be adjusted (normalised) in order to render it a full density operator. An alternative approach given the QSI mixture $\Xi_Q$ is a semi-density operator ($\Tr \Xi_Q \le 1$) is for the derivation - and properties (like monotonicity under CPTP maps), and the Pinsker inequality - to be established or appropriately adapted for semi-density operators.
    \item \textit{Measurement Back-Action}. Each measurement $\mathcal{M}_{e_k}$ provides a classical outcome $e_k$ and transforms the quantum state. The sequences of true states $\rho_E^\star(a_{1:k})$ and posterior beliefs $\Xi_Q^{(k)}(a_{1:k})$ are conditioned on the full history of actions $a_{1:k}$ and prior outcomes $e_{1:k-1}$. The calculation of the expected drop in divergence requires careful averaging over outcomes $e_k$ generated by the true environment $Q^\star$ acting on its state $\rho_E^\star(a_{1:k-1})$. The ergodicity condition (C1) is important in ensuring that the learning process is meaningful over time and avoids highly non-stationary state sequences.
    \item \textit{Contextuality and Non-locality}. As we discuss, if the true environment $Q^\star$ exhibits Kochen-Specker contextuality or Bell non-locality, the classical intuition that a history uniquely determines future outcome probabilities (independent of measurement choices for other compatible observables) breaks down. The requirement for an adapted, context-aware filtration complicates the structure of the conditional expectations inherent in any martingale argument e.g. standard assumptions of conditional independence of percepts, often used in classical proofs, are violated. However, the practical implications of these foundational issues may be more or less significant depending on context (see the Appendix below for an extended discussion).
    \item \textit{Informational Completeness}. Condition (C2) asserts that measurements are sufficiently informative (Fisher information matrix is full-rank up to $\epsilon$), the precise manner in which this guarantees a sufficient trend towards convergence at each step would require further analysis. For example, how (and whether) the error $\epsilon$ propagates into the final convergence bounds and rates is would be open question.
    \item \textit{No-Cloning and Sample Complexity}. The no cloning theorem fundamentally limits the agent's ability to learn. A QAIXI agent cannot make multiple measurements on identical copies of a past unperturbed quantum state to refine its likelihood estimates for different actions from that state as measurement consumes the state. This underscores the online and single-shot nature of quantum learning from a trajectory. It also has consequences for the effective sample complexity of learning undertaken by any QAIXI and is a significant difference from classical scenarios where data can often be re-analysed. Note that if i.i.d. copies at negligible cost, the no-cloning impact may be minimal.
\end{enumerate}
While the QSI convergence theorem sketch outlines a potentially plausible route to proving that a QAIXI-like agent can learn its quantum environment, a full proof faces unresolved theoretical and technical challenges. These primarily arise from the foundational differences between classical and quantum information processing, particularly concerning measurement, state representation, and contextuality. Addressing these issues remains a significant open research area in the theoretical foundations of quantum artificial general intelligence. QAIXI proposals can assist research into intelligent systems by clarifying how quantum phenomena may affect idealised agents. However, as noted above, in addition to the significant practical barriers to QAIXI, there remain a considerable number of open theoretical questions regarding the extent to which QAIXI proposals, including those above, are well-founded or feasible. These include whether the relationship of concepts of intelligence and quantum information processing is anything other than instrumental as a tool for classical agents.


\newpage
\section{Foundational Implications for QAIXI}
\label{app:ks-contextuality}

The Kochen-Specker (KS) theorem \cite{kochenSpecker1967} is a cornerstone of quantum foundations, reflecting central differences between quantum and classical descriptions of reality. It states that for quantum systems of dimension three or higher, it is impossible to assign definite, pre-existing values to all possible measurements (observables) in a way that is independent of the context of measurement (i.e., the order of measurement operations and which other compatible observables are measured alongside it). Here we elaborate in more detail on the implications of KS contextuality for the QAIXI framework, particularly concerning Theorem~\ref{thm:contextual_predictor_full}.

\subsection{Contextuality and QAIXI's Universal Mixture} \label{app:contextuality}
Classical AIXI operates under the assumption that an environment $\nu$ has definite properties that can be observed. The history $h_{<t}$ is a sequence of such definite observations. However, if the true quantum environment $Q^{\star}$ leads to states whose measurements exhibit contextuality, then QAIXI's understanding of $Q^*$ will differ from the classical case. Recall Theorem~\ref{thm:contextual_predictor_full} provides that no quantum Turing machine $Q$ (representing a predictive model for the environment) can output a commuting family of projectors $\{Q_{a_{1:m}}(e_{1:m})\}$ that perfectly predicts outcomes for a KS-uncolourable set of measurements (Eq.~\eqref{eq:perfect_prediction}). The consequence for QAIXI's belief state, $\Xi_{Q}(a_{1:m})$ (defined in Eq.~\eqref{eq:XiQ-density}), is significant. QAIXI's belief is a mixture that sums over all semi-computable quantum environments $Q \in \mathcal{Q}_{\!\mathrm{sol}}$. If $Q^\star$ is KS-contextual, then any individual $Q$ within the sum that attempts to model $Q^\star$ using classical, non-contextual hidden variables (or by outputting commuting projectors that aim to assign definite values independent of measurement choices) will fail to accurately reproduce the statistics of $Q^\star$. Thus, for $\Xi_Q$ to converge to a description of $Q^\star$, it must implicitly give dominant weight to models $Q$ that are themselves contextual or whose predictive mechanisms do not rely on non-contextual value assignments.

\subsection{Contextuality Consequences for Learning}
In classical AIXI, the history $h_{<t}$ is a sequence of action-percept pairs, where percepts are assumed to be objective records of definite environmental properties via measurement. For QAIXI situated within a contextual quantum environment, the situation is more nuanced. The outcome $o_t$ of a measurement action $a_t$ may not be interpretable as revealing a pre-existing property $P_j$ of the system. Instead, $o_t$ is realised only upon measurement. Intuitively, its probability (and even its meaning) can depend on the complete set of compatible observables measured in action $a_t$.
To see this, let $Q^\star$ be an environment that prepares a qutrit ($d=3$) system. Suppose QAIXI can choose actions $a_t$ that correspond to measuring sets of compatible projectors from a KS set (e.g., a set of projectors for a qutrit as in the Peres-Mermin square \cite{peres1991two}, or simpler KS sets).
Let $P_1, P_2, \dots, P_n$ be such a KS set.
\begin{itemize}
    \item If QAIXI performs action $a_1$ measuring the context $\mathcal{C}_1 = \{P_i, P_j, P_k\}$ (where $P_i+P_j+P_k = \mathbb{I}$), it obtains outcomes $(o_i, o_j, o_k)$.
    \item If it later (or in a counterfactual scenario) performs action $a_2$ measuring context $\mathcal{C}_2 = \{P_i, P_l, P_m\}$ (where $P_i+P_l+P_m = \mathbb{I}$), it obtains outcomes $(o'_i, o'_l, o'_m)$.
\end{itemize}
The KS theorem implies that QAIXI cannot learn a universal value assignment $v(P_x) \in \{0,1\}$ such that $o_i = v(P_i)$ and $o'_i = v(P_i)$ consistently across all contexts for all projectors in the KS set, while also satisfying $\sum_{P_x \in \mathcal{C}} v(P_x) = 1$ for all contexts $\mathcal{C}$. This means that the Bayesian update rule for QAIXI (Eq.~\eqref{eq:posteriorupdate}), $\Xi_{Q}^{(t)}\!(a_{1:t-1}) := \frac{\mathcal M_{e_{t-1}}\!\bigl(\Xi_{Q}^{(t-1)}(a_{1:t-2})\bigr)}{\dots}$, must process the information $e_{t-1}$ (which includes $o_{t-1}$) in a way that acknowledges its contextual nature. The meaning of $o_{t-1}$ for updating beliefs about $Q^\star$ is tied to the full instrument $\mathcal{M}_{e_{t-1}}$ and potentially the set of all measurements that constituted action $a_{t-1}$. This is why the posterior must be refined by the entire future instrument schedule (or at least the current one) if one aims for precise predictions in contextual scenarios. A simple classical history string $h_{<t}$ is insufficient.

\subsection{Convergence Implications}
The conditions for the convergence of QSI as sketched in Theorem~\ref{thm:qsi-convergence}, relies on a martingale argument. Martingales are defined with respect to a filtration, which represents the information accumulated over time. If measurement outcomes are contextual, the structure of this filtration becomes more complex. The information gained from an outcome $o_t$ is not just about the specific observable measured but also about the context.
This is why the martingale assumptions of Theorem~\ref{thm:qsi-convergence} require an adapted filtration that records measurement contexts - because standard conditional independence assumptions, often implicit in simpler martingale proofs, may not hold if the probability of future outcomes depends on the context of past measurements in non-trivial ways. The ergodicity (C1) and informational completeness (C2) conditions would also need to be interpreted in light of contextuality. In essence, KS contextuality reflects the differences between classical and quantum ontology - and classical and quantum models of computation. Its consequence is that any QAIXI would need to adopt a more sophisticated model of reality than classical AIXI. It cannot assume that the environment possesses a set of definite, non-contextual properties that are merely uncovered by measurement. Instead, QAIXI must learn and operate in a world where measurement outcomes are co-created by the interaction between the agent's choice of measurement context and the quantum system.

\section{Bell Non-Locality and No-Cloning}
\label{app:bell-no-cloning}
Beyond contextuality, other foundational quantum principles like Bell non-locality and the no-cloning theorem impose significant constraints and offer unique characteristics to QAIXI compared to its classical counterpart. We expand upon the discussion above in the subsections below.

\subsection{Bell Non-Locality}
Bell's theorem \cite{bell_speakable_2004} demonstrates that quantum mechanics predicts correlations between spatially separated systems (that were previously entangled) which cannot be explained by any theory based on local hidden variables (LHVs). The implications for QAIXI are indicative of the differences between quantum and classical environments.  If the true environment $Q^{\star}$ subsists in entangled states and allowing QAIXI to perform measurements on these subsystems at space-like separation, then QAIXI's universal mixture $\Xi_Q$ (Eq.~\eqref{eq:XiQ-density}) must accommodate models $Q$ that are inherently non-local:
\begin{enumerate}
    \item Classical AIXI, relying on Solomonoff induction over classical Turing machines $\Mcal_{sol}$, constructs its universal prior $\xi_U$ (Eq.~\eqref{eq:xi_U_classical_redef}) from environments that are, by their classical nature, local. Such a prior would assign zero probability to observing correlations that violate Bell inequalities (e.g., the CHSH inequality).
    \item QAIXI, by summing over quantum environments $\mathcal{Q}_{\!\mathrm{sol}}$, can, in principle, learn and adapt to a non-local $Q^\star$. The term $2^{-K_Q(Q)}\rho_E^Q(a_{1:m})$ must include $Q$s whose $\rho_E^Q(a_{1:m})$ can lead to Bell-violating statistics upon appropriate measurements.
\end{enumerate}
For example, consider a QAIXI agent designed with two components, Alice and Bob, who are spatially separated.
\begin{enumerate}[label=(\alph*)]
    \item The environment $Q^\star$ repeatedly sends Alice and Bob a pair of qubits in an entangled Bell state, e.g., $|\Psi^-\rangle = \frac{1}{\sqrt{2}}(|01\rangle - |10\rangle)$.
    \item At each cycle $t$, Alice receives a classical random bit $x \in \{0,1\}$ and Bob receives $y \in \{0,1\}$ (these could be part of the agent's internal state or from an external prompter, effectively part of the action setup).
    \item Alice chooses a measurement setting $A_x$ (e.g., measuring her qubit along one of two directions) and obtains outcome $o_A \in \{+1, -1\}$. Bob similarly chooses $B_y$ and gets $o_B \in \{+1, -1\}$.
    \item The action for QAIXI could be $a_t = (\text{setting choice for } A_x, \text{setting choice for } B_y)$, and the percept $e_t = (o_A, o_B, r_t)$. The reward $r_t$ could be $1$ if $o_A \cdot o_B = (-1)^{xy}$ (CHSH game condition for certain settings) and $0$ otherwise.
\end{enumerate}
If $Q^\star$ is quantum, Alice and Bob can choose their measurement settings such that they win the CHSH game with a probability that is impossible for any classical LHV strategy.
A classical AIXI, whose models $\nu \in \Mcal_{sol}$ are constrained by LHV, would never be able to predict or achieve this quantum level of success. QAIXI, with its quantum prior $\Xi_Q$, could learn to implement the optimal quantum strategy and understand that $Q^\star$ is non-local. The presence of non-local correlations means that the percept $e_t=(o_A, o_B, r_t)$ contains components $o_A$ and $o_B$ that are correlated in a way not explainable by any shared information in their common past light-cone (beyond the initial entanglement). This challenges classical notions of conditional independence typically used in agent-based convergence proofs. It can also break the martingale structure assumed in Theorem \ref{thm:qsi-convergence} unless non-local models $Q$ are included in the mixture. More detail can be found in literature on quantum game theory \cite{gutoski2007toward,guo2008survey,bostanci2022quantum}. Moreover, recent literature has shown that in principle quantum machines could learn from exponentially less experiments than classical analogues by leveraging non-local correlations \cite{huang2022quantum}. Thus the impact of non-locality on learning of any quantum AIXI agent remains an important and open question.

\subsection{No-Cloning Theorem and QAIXI's Sample Complexity} \label{app:no-cloning-sample-complexity}
The no-cloning theorem states that it is impossible to create an identical copy of an arbitrary unknown quantum state. This has consequences for how QAIXI can learn about its environment. 

\subsubsection{Learning} Firstly, because percepts upon which QAIXI learning is based are QTC channels, the no cloning theorem gives rise to differences in how QAIXI learning would occur:
\begin{enumerate}
    \item When QAIXI performs an action $a_t$ (especially if it's a measurement) on the environment state $\rho_E^{Q}(a_{1:t})$, this interaction alters the state. The post-measurement state is different, and the specific instance of $\rho_E^{Q}(a_{1:t})$ is consumed in yielding the outcome $o_t$.
    \item Classical AIXI can, in principle, take a history $h_{<t}$ and test many hypothetical continuations with a given model $\nu$ without altering the data $h_{<t}$. Classical information can be copied and reused.
    \item QAIXI cannot do this with quantum states. To know how $Q^\star$ would have responded to a different action $a'_t$ from the \emph{exact same} instance of $\rho_E^{\star}(a_{1:t-1})$, is impossible. It would need $Q^\star$ to produce that state (or an identical one) again.
\end{enumerate}

\subsubsection{Sample Complexity}
This one-shot nature of quantum measurement on unknown states impacts the sample complexity of learning by a QAIXI agent:
\begin{enumerate}
    \item To distinguish between different hypotheses $Q$ about the environment, or to estimate the expected outcome/reward for different actions, QAIXI needs to observe the environment's response multiple times.
    \item Since each observation of a specific state instance is unique and unrepeatable, QAIXI effectively needs $Q^\star$ to prepare a new instance of a comparable state for each piece of information it wants to gather about a particular type of situation or action.
    \item If QAIXI wants to learn the full characteristics of $\rho_E^{\star}(a_{1:t})$ through measurements, it is performing a form of quantum state tomography. Full tomography of an $n$-qubit state generally requires a number of measurement settings and repetitions that scales exponentially with $n$. While QAIXI's goal is not necessarily full tomography but rather optimal action selection, its ability to learn the relevant features of $Q^\star$ is still constrained by the information extractable per (unclonable) interaction.
    \item As noted above, this means the learning rate is limited by state-preparation resources. If $Q^\star$ itself has computational costs or time delays associated with preparing states, this directly translates into a slower learning rate for QAIXI in terms of real-time or computational steps of $Q^\star$.
\end{enumerate}
The informational completeness condition (C2) for Theorem~\ref{thm:qsi-convergence} ensures that measurements are informative. However, no-cloning dictates that this information is gathered sequentially, one unclonable sample at a time. The $1/t$ convergence rate for relative entropy and $\mathcal{O}(t^{-1/2})$ for trace distance are asymptotic statements about the number of interactions $t$. The practical time or resources needed to achieve a certain level of accuracy will be higher in quantum scenarios where each sample is precious and unrepeatable, compared to classical scenarios where data can be exhaustively analyzed. Bell non-locality thus expands the class of environments QAIXI must consider beyond classical capabilities, while the no-cloning theorem imposes a fundamental restriction on the efficiency with which QAIXI can extract information from its quantum world, directly impacting its sample complexity for learning.

\newpage
\section{Notation}
\begin{table}[ht!]
\centering
\renewcommand{\arraystretch}{1.12}
\begin{tabularx}{\linewidth}{>{\bfseries}l X}
\toprule
\multicolumn{2}{c}{\textbf{Glossary of Symbols}}\\
\midrule
$a_t$                         & Action chosen by the agent at cycle $t$ (classical label). \\
$o_t,\;r_t$                   & Observation and reward components of the percept. \\
$e_t=(o_t,r_t)$               & Percept at cycle $t$.
\\
$w_\nu$               & Weighted posterior: $w_\nu = \frac{2^{-K(\nu)}\,
        \prod_{i=1}^{t-1}\nu(e_i\mid h_{<i}a_i)}
       {\displaystyle
        \sum_{\mu\in\Mcal_U}2^{-K(\mu)}\,
        \prod_{i=1}^{t-1}\mu(e_i\mid h_{<i}a_i)}$.\\
$h_{<t}$                      & Complete history up to but not including cycle $t$. \\
$\mathcal A,\;\mathcal O$     & Action and observation alphabets. \\
$m$                           & Fixed lifetime / planning horizon. \\
$\gamma$                      & Discount factor in $[0,1)$. \\
\midrule
$\mu,\nu$                     & Classical (semi-computable) environments. \\
$K(\nu)$                     & Classical Kolmogorov complexity of $\nu$. \\
$\xi_U$                       & Classical Solomonoff prior $\displaystyle\sum_{\nu}2^{-K(\nu)}\nu(\,\cdot\,)$. \\
\midrule
$\mathcal H_A,\;\mathcal H_E$ & Hilbert spaces of the agent and environment registers. \\
$\rho_A^{(t)},\rho_E^{(t)}$   & Agent / environment density operators for cycle $t$. \\
$\rho_{AE}^{(t)}$             & Joint state of agent and environment. \\
$\Phi_{U_{a_t}}$              & Unitary CPTP channel implementing a coherent action $a_t$. \\
$\mathcal I_{a_t}=\{\mathcal E^{a_t}_k\}_k$ & Quantum instrument representing a measurement action. \\
$\Gamma_{\mathrm{obs}}$       & Finite outcome alphabet of the instrument. \\
$M^{a_t}_k$                   & Kraus / POVM operators of branch $k$. \\
\midrule
$\mathcal Q_{\!\mathrm{sol}}$ & Set of chronological, semi-computable quantum environments. \\
$Q$                           & Individual quantum environment (CPTP channel family). \\
$|Q\rangle$                   & Choi–Jamiołkowski purification of an environment channel. \\
$K_Q(Q)$                      & Quantum Kolmogorov complexity (Eq.~\ref{eq:QKC}). \\
$\Xi_Q(a_{1:m})$              & \emph{Operator-valued} universal mixture (semi-density op.) \\
$\xi_Q(e_{1:m}\!\Vert a_{1:m})$ & \emph{Scalar} probability obtained by projecting $\Xi_Q$ onto the instrument POVM. \\
$U_{\text{univ}}$             & Fixed universal quantum Turing machine. \\
\midrule
$V^{\pi}_{Q}$                 & Discounted value of policy $\pi$ in environment $Q$. \\
$V^{\pi}_{\Xi_Q}$             & Value averaged under the universal mixture (Eq.~\ref{eq:qaixi‐value}). \\
$\pi^{\mathrm{QAIXI}}$        & Optimal policy that maximises $V^{\pi}_{\Xi_Q}$. \\
\midrule
$\Lambda_t^{Q}$               & \emph{Likelihood operator}: state after applying the first $t$ measurement branches to $\rho_E^{Q}$ (Sec.~\ref{sec:qsi}). \\
$D(\rho\Vert\sigma)$          & Umegaki quantum relative entropy $\operatorname{Tr}[\rho(\log\rho-\log\sigma)]$. \\
\bottomrule
\end{tabularx}
\caption{Summary of QAIXI notation.}
\label{tab:symbols}
\end{table}

\end{document}